\DeclareMathOperator{\Var}{Var}
\DeclareMathOperator{\Cov}{Cov}
\newcommand{\abs}[1]{\left\vert#1\right\vert}
\newcommand\BibTeX{{\rmfamily B\kern-.05em \textsc{i\kern-.025em b}\kern-.08em
T\kern-.1667em\lower.7ex\hbox{E}\kern-.125emX}}
\newcommand{\E}{\mathbb{E}}
\newcommand{\R}{\mathbb{R}}
\newcommand{\set}[1]{\left\{#1\right\}}
\newtheorem{thm}{Theorem}
\newtheorem{prop}[thm]{Proposition}
\newtheorem{assumption}[thm]{Assumption}
\newtheorem{prediction}[thm]{Prediction}
\begin{document}

\title{Properties of the Affine Invariant Ensemble Sampler's
`stretch move' in high dimensions}
%\runningheads{Properties of the Affine Invariant Ensemble Sampler in high dimensions}{Huijser,~D, Goodman,~J, \&  Brewer,~B}

\author{David Huijser\thanks{Department of Statistics, The University of Auckland, Private Bag 92019, Auckland 1142, New Zealand {\tt dhui890@aucklanduni.ac.nz}, {\tt jesse.goodman@auckland.ac.nz}, {\tt brendon.brewer@gmail.com}}~\footnote{To whom correspondence should be addressed} \and Jesse Goodman\footnotemark[1] \and Brendon J. Brewer\footnotemark[1]}

%\address{}

\maketitle

\begin{abstract}
We present theoretical and practical properties of the affine-invariant
ensemble sampler Markov Chain Monte Carlo method. 
In high dimensions, the sampler's `stretch move'
has unusual and undesirable properties. 
We demonstrate this with an $n$-dimensional correlated Gaussian toy problem with a known mean and covariance structure,
and a multivariate version of the Rosenbrock problem. 
Visual inspection of trace plots suggests the burn-in period is short.
Upon closer inspection, we discover the mean and the variance of the target distribution do not match the known values, and the chain takes a very long time to
converge.
This problem becomes severe as $n$ increases beyond 50. 
We therefore conclude that the stretch move should not
be relied upon (in isolation) in moderate to high dimensions. 
We also present some theoretical results explaining this behaviour.

\textsl{Key words:} Affine Invariant Ensemble Sampler, Stretch Move, Markov Chain Monte Carlo
\end{abstract}

%\ack{This class file was developed by Sunrise Setting Ltd,
%Paignton, Devon, UK. Website:\\
%\href{http://www.sunrise-setting.co.uk}{\texttt{www.sunrise-setting.co.uk}}}

\maketitle

\section{Introduction}
Since the introduction of the Markov Chain Monte Carlo methods (MCMC)
\citep{MetRosTel1953}, a large number of different algorithms have been developed.
Popular examples include Metropolis-Hastings \citep{Has1970},
slice sampling \citep{Nea2003} and Hamiltonian MCMC \citep{Neal2011}. 
Each has its own strengths and weaknesses.
A recent innovative MCMC method is the affine-invariant ensemble sampler (AIES) introduced by \cite{GooWea2009}. Methods that are invariant under
affine transformation of the parameter space offer much promise
for highly dependent target distributions, because there is usually
{\em some} affine transformation that would make the target density much easier
to sample from,
and the sampler performs identically on the untransformed problem
as it would on the transformed one.

The intuitition behind the `stretch move' of the AIES (described in Section~\ref{sec:aies}) is compelling. 
It is also straightforward to implement because the user doesn't need
to define a proposal distribution, or add any additional information except the ability to evaluate a function proportional to the density of the target distribution.  
This allowed \citet{ForHogLanGoo2013} to develop a high quality Python software implementation \verb"emcee", where the user only needs to implement a function that evaluates the unnormalized
target density. 
Performance comparisons \citep[e.g.][]{Lampart2012} show that the AIES is competitive with other common techniques, and outperforms them on certain kinds of target distribution.
As a result, the algorithm has become popular, especially in astronomy \citep{VanMonJoh2015,CroPetSch2015}.

However, questions remain about the behaviour of the AIES on high dimensional problems. 
In Section~\ref{sec:gaussian} we test the method on a correlated Gaussian target distribution and discuss the observed behaviour that arises in high dimensions. In Section~\ref{sec:rosenbrock}
we see the same problem on a more elaborate example.
Section~\ref{sec:taies} consists of a mathematical exploration of this behaviour.

\section{The AIES algorithm}\label{sec:aies}

The AIES algorithm works by evolving a set of $L$ samples, called {\it walkers}, of $n$ parameters.
A walker can be considered as a vector in the $n$-dimensional parameter space.
One iteration of AIES involves a sweep over all $L$ walkers.
For each walker, a new position is proposed, and
accepted with a
Metropolis-Hastings type acceptance probability.
The aim is to simulate the distribution specified by a target density function
$\pi(\mathbf{x})$ on the $n$-dimensional parameter space,
but, as is common for ensemble methods, the target distribution is
actually
\begin{align}
\prod_{i=1}^L \pi(\mathbf{x}_i),
\end{align}
that is, the target distribution $\pi$
independently replicated $L$ times, once for each walker.

Several kinds of proposals are possible, and we describe the \emph{stretch move} used in \verb"emcee".
Superscripts will denote walkers and subscripts will denote coordinates.
Thus $\mathbf{X}^{(j)}(t)$ means the position (in $n$-dimensional space) of the $j^\text{th}$ walker, $j=1,\dotsc,L$, at discrete time $t$ during the algorithm, and $X^{(j)}_i(t)$ means the $i^\text{th}$ coordinate of that walker, $i=1,\dotsc,n$.

At each iteration $t$, each walker is updated in sequence.
To update the $k$th walker, we select
select a complementary walker $\mathbf{Y}=\mathbf{Y}(t)=\mathbf{X}^{(j)}(t)$ with $j\neq k$ chosen uniformly, and define the proposal point
 \begin{equation}
 \widetilde{\mathbf{X}} =  Z\mathbf{X}^{(k)}   + (1-Z)\mathbf{Y}
 \label{eq:Proposal}
 \end{equation}
where $Z$ is a real-valued stretching variable drawn according to the density
\begin{equation}
g(z) \propto
\begin{cases}
\frac{1}{\sqrt{z} } & \text{ if } z \in \left[\frac{1}{a},a \right]  \\
0                   & \text{otherwise} 
\end{cases}
 \label{eq:Zdensity}
\end{equation}
where $a$ is an adjustable parameter, usually set to $2$ which is considered a good value in essentially all situations \citep{ForHogLanGoo2013}.
Finally, the proposal $\widetilde{\mathbf{X}}$ is accepted
to replace $\mathbf{X}^{(k)}$ with probability
 \begin{equation}
\begin{aligned}
p(\mathbf{X},\mathbf{Y},Z)
&=\min\left( 1, Z^{n-1} \frac{\pi\bigl(\widetilde{\mathbf{X}} \bigr) }{ \pi\left( \mathbf{X} \right)}\right).
%= \min\left(1,Z^{n-1} \frac{\pi\left(Z \mathbf{X}+(1-Z)\mathbf{Y}\right)}{\pi\left(\mathbf{X}\right)} \right)
\end{aligned}
 \label{eq:AcceptanceProb}
 \end{equation}
Otherwise, $\mathbf{X}^{(k)}$ remains unchanged.

In words, one can imagine the two selected walkers, $\mathbf{X}$
(the one that might be moved)
and $\mathbf{Y}$ (the one that helps construct the proposal),
defining a line in parameter space. 
The proposal is to move the main walker $\mathbf{X}$ to a new position along the line connecting $\mathbf{X}$ to $\mathbf{Y}$.
The stretching variable $Z$ defines how far the main walker moves along this line (either towards or away from $\mathbf{Y}$) to obtain a proposed new position, with $Z=1$ corresponding to no change.
Similar to the single particle Metropolis-Hastings sampler, the acceptance probability depends on the ratio of the target densities at the current and proposal points, with an additional factor $Z^{n-1}$ arising because the proposed position is chosen from a one dimensional subset of the $n$-dimensional space.

To enable parallel processing, the implementation in {\tt emcee} performs several stretch moves simultaneously.
The vector of walkers is split into two subsets $S^{(0)} = \{ \mathbf{X}^{(k)}\colon  k=1, \dots, L/2\}$ and $S^{(1)} = \{ \mathbf{X}^{(k)}\colon k=L/2, \dots, L\}$.
In the first ``half-iteration'', all walkers in $S^{(0)}$ are simultaneously updated according to the stretch move described above, with all the complementary walkers chosen from $S^{(1)}$.
In the second half-iteration, the sets are switched and all walkers in $S^{(1)}$ are simultaneously updated, with complementary walkers chosen from $S^{(0)}$.
The time variable $t$ denotes the number of iterations, each consisting of a pair of half-iterations. Because of subtleties related to detailed balance, the main and complementary walkers must not be updated simultaneously, hence the splitting into two subsets $S^{(0)},S^{(1)}$.

Later, we will also consider a simpler continuous-time variant without the subsets $S^{(0)},S^{(1)}$.
In this setup, at times $t$ chosen according to an independent exponential clock, a main walker $\mathbf{X}^{(k)}(t)$ and a complementary walker $\mathbf{X}^{(j)}(t)$ are chosen uniformly among all walkers, and a stretch move is performed.
To ensure consistency of the time variable, we set the overall rate of moves to equal $L$, the number of walkers, so that each walker is chosen as main walker once per unit of time, on average.

The walkers collectively -- i.e., the vector $\mathbf{X}(t)=\left(\mathbf{X}^{(1)}(t), \dots, \mathbf{X}^{(L)}(t)\right)$ in $n\times L$-dimensional space -- form a Markov chain under either of these dynamics (either in discrete time or in continuous time, respectively).
Properties of the scaling variable $Z$ and the acceptance probability in \eqref{eq:AcceptanceProb} ensure that this Markov chain has an equilibrium distribution corresponding to the target density $\pi$.
Specifically, the equilibrium distribution is that the walkers $\mathbf{X}^{(j)}$, $j=1,\dotsc,L$, are independent random samples from the density $\pi$.
Under mild conditions, this is the unique equilibrium distribution and any initial distribution will approach it as $t\to\infty$, provided that the initial points $\mathbf{X}^{(j)}(0)$ do not lie in an $(n-1)$-dimensional affine subspace of the parameter space.
In particular, this requires that $L\geq n+1$ in the non-parallel version.
Hence, for each sufficiently large time $t$, \emph{empirical means} such as 
  \begin{equation}
    \frac{1}{L}\sum_{j=1}^L f\left(\mathbf{X}^{(j)}(t)\right)
  \end{equation}
can be used as approximations of the integral $\int_{\mathbb{R}^n} f(\mathbf{x}) \pi(\mathbf{x}) d\mathbf{x}$, corresponding to the mean $\E\left(f(\mathbf{X})\right)$ when $\mathbf{X}$ is a random variable with density function $\pi$.
Similarly, empirical variances can be used to approximate $\Var\left(f\left(\mathbf{X}\right)\right)$.
These empirical means and variances can also be averaged over different values of $t$.
If this averaging starts after the Markov chain has burned in and spans a sufficient time compared to the mixing time, the overall estimate will improve.

As its name suggests, the AIES is invariant under affine transformations of parameter space.
To explain this property, suppose $\mathbf{X}$ has density $\pi$.
Given an invertible $n\times n$ matrix $\mathbf{A}$ and $n$-dimensional vector $\mathbf{b}$, define the affine transformation $\mathbf{x} \mapsto \mathbf{A}\mathbf{x}+\mathbf{b}$ and the random variable $\mathbf{Q}=\mathbf{A}\mathbf{x}+\mathbf{b}$.
Then $\mathbf{Q}$ has density 
\begin{equation}
\pi'(\mathbf{q}) = \frac{\pi\left(\mathbf{A}^{-1}(\mathbf{q} - \mathbf{b})\right)}{\det \mathbf{A}}.
\label{eq:Qdensity}
\end{equation}
The fact that the proposal point in \eqref{eq:Proposal} is a linear combination of existing walkers causes the AIES algorithm to be invariant under affine transformations:

\begin{prop}[Affine invariance property]\label{prop:AffineInvariance}
  Running the AIES with initial conditions $\mathbf{X}^{(j)}(0)$ and density $\pi$ is equivalent to running the AIES with initial conditions $\mathbf{Q}^{(j)}(0)=\mathbf{A}\mathbf{X}^{(j)}(0)+\mathbf{b}$ and density $\pi'$.
\end{prop}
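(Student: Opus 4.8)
The plan is to read ``equivalent'' as equality in distribution of the full walker trajectories after applying the affine map to each walker, and to establish it by an exact coupling. I would realise both chains on a common probability space driven by the \emph{same} randomness at every step: the same choice of main walker $k$ and complementary index $j$, the same stretch variable $Z$, and the same uniform $U\in[0,1]$ used to accept or reject. The quantity to track is the affine relationship
\begin{equation}
\mathbf{Q}^{(j)}(t) = \mathbf{A}\mathbf{X}^{(j)}(t) + \mathbf{b}, \qquad j=1,\dotsc,L,
\label{eq:invariant}
\end{equation}
which holds at $t=0$ by hypothesis. It then suffices to show that one stretch move preserves \eqref{eq:invariant}, so that the relationship propagates by induction; since the two processes then agree pathwise, they agree in law.

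For the inductive step I would first check that the proposal transforms correctly. With $\mathbf{Y}=\mathbf{X}^{(j)}(t)$, the proposal \eqref{eq:Proposal} in the $\pi'$-chain satisfies
\begin{equation}
\widetilde{\mathbf{Q}} = Z\mathbf{Q}^{(k)} + (1-Z)\mathbf{Q}^{(j)} = Z\bigl(\mathbf{A}\mathbf{X}^{(k)}+\mathbf{b}\bigr) + (1-Z)\bigl(\mathbf{A}\mathbf{Y}+\mathbf{b}\bigr) = \mathbf{A}\widetilde{\mathbf{X}}+\mathbf{b},
\end{equation}
where the translation $\mathbf{b}$ survives precisely because the coefficients $Z$ and $1-Z$ sum to one. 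Next I would verify that the acceptance probabilities coincide: substituting the density relation \eqref{eq:Qdensity} gives $\pi'(\widetilde{\mathbf{Q}}) = \pi(\widetilde{\mathbf{X}})/\det\mathbf{A}$ and $\pi'(\mathbf{Q}^{(k)}) = \pi(\mathbf{X}^{(k)})/\det\mathbf{A}$, so the factors of $\det\mathbf{A}$ cancel in the ratio and the shared Jacobian factor $Z^{n-1}$ is identical; hence the probability in \eqref{eq:AcceptanceProb} is the same in both chains. Since both chains compare this common probability against the same $U$, they accept or reject in lockstep, and in either case \eqref{eq:invariant} is maintained (with $\mathbf{Q}^{(k)}\mapsto\widetilde{\mathbf{Q}}=\mathbf{A}\widetilde{\mathbf{X}}+\mathbf{b}$ on acceptance, or no change on rejection). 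Note that only ratios of $\pi$ enter, so the argument is unaffected by whether $\pi$ is normalised.

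The argument is insensitive to which dynamical variant is used: the combinatorial selection of main and complementary walkers is driven by identical randomness in both chains, whether under the discrete parallel scheme with subsets $S^{(0)},S^{(1)}$ or the continuous-time exponential-clock scheme, so the same induction applies. The conceptual heart of the proof -- and the only place where any care is needed -- is the observation that the stretch move produces an \emph{affine} combination of walkers, with coefficients summing to one, which is exactly what lets the translation $\mathbf{b}$ commute through the update; the accompanying cancellation of $\det\mathbf{A}$ in the acceptance ratio then makes the two dynamics identical rather than merely similar.
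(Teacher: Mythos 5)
Your proof is correct, and it develops exactly the idea the paper itself uses: the paper offers no formal proof, only the remark that the proposal point in \eqref{eq:Proposal} is an affine combination of existing walkers (coefficients $Z$ and $1-Z$ summing to one), which is precisely the observation at the heart of your coupling argument, together with the cancellation of $\det\mathbf{A}$ in the acceptance ratio. Your pathwise induction simply makes this one-line justification rigorous, so it is the same approach, fully written out.
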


In particular, the AIES algorithm does not give special treatment to moves along the coordinate axes.
 
\section{The AIES for sampling a high dimensional Gaussian}\label{sec:gaussian}

The AIES has been used with great success in various research projects \citep{VanMonJoh2015,CroPetSch2015}, and is especially popular in the astronomy community. 
However there is reason for caution if one tries to apply this method in higher dimensional problems $(n > 50)$, as we will show. 
Unfortunately, the output from AIES may resemble the output of an MCMC algorithm
``in equilibrium'', yet the points obtained from the AIES might not accurately represent the target distribution, with the true equilibrium taking much longer to achieve.

To investigate the properties of the AIES in $n$ dimensions we chose a correlated $n$-dimensional Gaussian as the target distribution
(See also the correlated Gaussian studied by \citet{Lampart2012}).
More specifically, the target distribution is a discrete-time Ornstein-Uhlenbeck process, also known as a discrete-time autoregressive process of the order 1, hereafter referred to as an AR(1) process. 
This model is well suited to be used for benchmarking, because
posterior distributions in Bayesian statistics are often approximately
multivariate normal. 
Besides this, the AR(1) is also useful as a prior in time series modelling.

The AR(1) distribution is the distribution of the random vector $\mathbf{X}$ whose coordinates are defined recursively by   
\begin{equation}
\begin{aligned}
X_1 &\sim N(0,1) \\
X_2|X_1 &\sim N(\alpha X_1, \beta^2) \\
X_3|X_2 &\sim N(\alpha X_2, \beta^2) \\
& \vdots \\
X_n|X_{n-1} &\sim N(\alpha X_{n-1}, \beta^2) 
\end{aligned}
\label{eq:AR1distr}
\end{equation}
where $N(\mu,\sigma^2)$ denotes a normal distribution, and
$\alpha$ controls the degree of correlation from one coordinate to the
next. We set $\beta = \sqrt{(1 - \alpha^2 )}$ so the marginal distribution
of all of the coordinates is $N(0, 1)$. 
If we run MCMC to sample this target distribution,
it should be straightforward to verify whether the output is correct, since
the expected values and standard deviations of all coordinates are 0 and 1
respectively.
To test the AIES,
we arbitrarily chose the coordinate $x_1$ as a probe of the convergence
properties of the AIES.

We sampled the AR(1) target distribution using \verb"emcee" \citep{ForHogLanGoo2013}
with $\alpha$ set to 0.9. We tested three values of
the dimensionality: $n=10$, $n=50$, and $n=100$, and set the number of walkers
$L$ to $2n$ in each case.
Each run consisted of 200,000 iterations (each of which is a loop over all
walkers). Each run was thinned to reduce the size of the
output.

%walkers). We thinned each run by a factor of 10,
%leaving 1000 samples, of which the first 500 samples were discarded as ``burn-in''.
%This leaves $500$ samples for inference and diagnostics.
For reasons explained in the next section, for each value of the dimensionality
$n$, we performed four separate runs, each of which had the starting positions sampled from four different widely dispersed distributions. These distributions are $N(0,5^2), N(1,5^2),N(-1,5^2)$, and $N(1,10^2)$. The initial conditions were generated by drawing each coordinate of each walker independently from these distributions. \\

\indent For a properly working MCMC method applied to this problem, the output should have an observed mean $\hat{\mu}\approx 0$ and an observed standard deviation $\hat{\sigma} \approx 1$. However, the observed values of $\hat{\sigma}$ for the obtained target distribution for $n=100$ dimensions (displayed in Table~\ref{tab:gaussian_mean}) are smaller than the true value $\sigma=1$. The results for $n=50$ also
appear to be suspect but to a lesser degree.

% plots made with: Load_and_make_plots2 
\begin{figure}[p]
\begin{subfigure}{0.31\textwidth}
\includegraphics[width=0.85\linewidth]{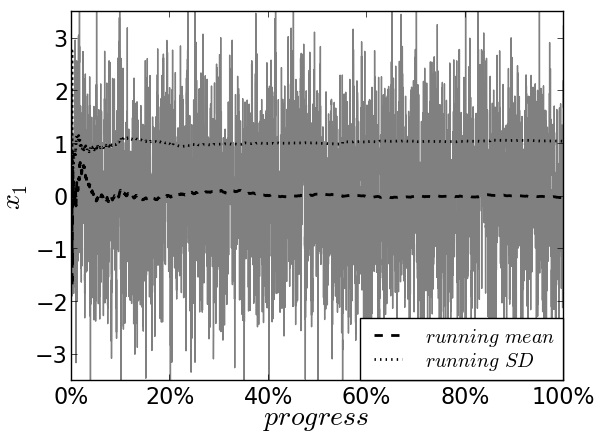}
\caption{$n=10$}
\label{fig:1a}
\end{subfigure}
\begin{subfigure}{0.31\textwidth}
\includegraphics[width=0.85\linewidth]{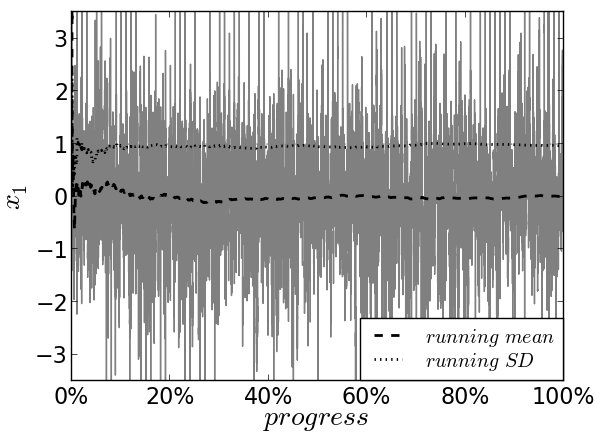}
\caption{$n=50$}
\label{fig:1b}
\end{subfigure}
\begin{subfigure}{0.31\textwidth}
\includegraphics[width=0.85\linewidth]{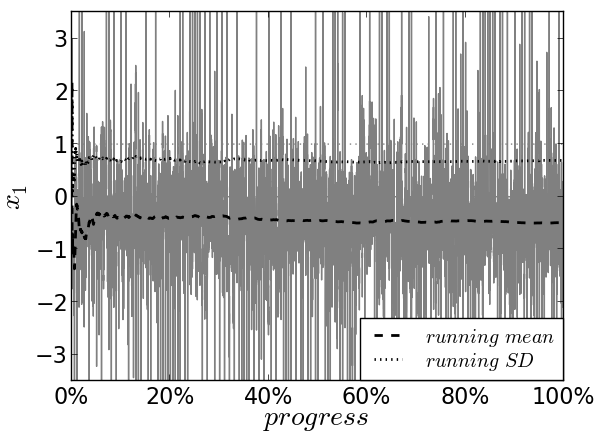}
\caption{$n=100$}
\label{fig:1c}
\end{subfigure}
\\
\vspace{-10pt}
\caption*{Graphs of the flattened trace plots of the first coordinate $x_1$ for $n=10$, $n=50$ and $n=100$. The $x$-axis is proportional to CPU time. The running
means and standard deviations are averaged over the second half of the
run, so progressively exclude more of the initial part of the run as time increases.
The $n=10$ and $n=50$ runs give more or less accurate results by the end of the run,
but $\hat{\sigma}$ is too small in the   $n=100$ run even though the trace plot might look satisfactory to the eye.}
\vspace{5pt}

\begin{subfigure}{0.31\textwidth}
\includegraphics[width=0.85\linewidth]{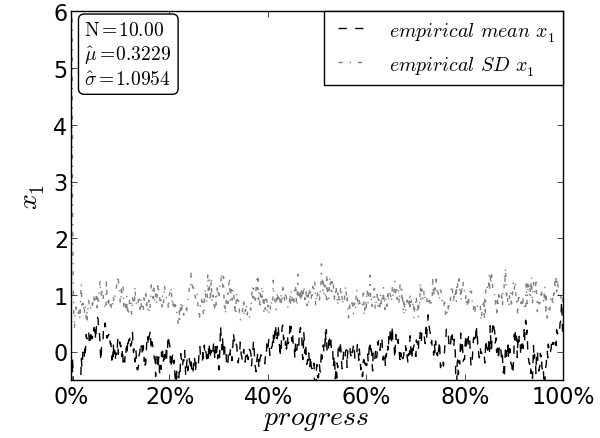}
\caption{$n=10$}
\label{fig:1d}
\end{subfigure}  
\vspace{3pt}
\begin{subfigure}{0.31\textwidth}
\includegraphics[width=0.85\linewidth]{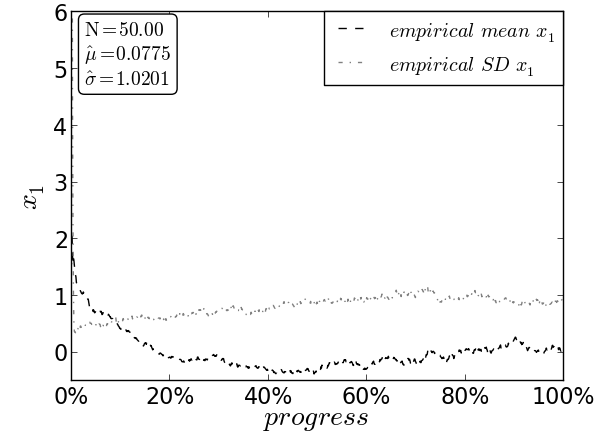}
\caption{$n=50$}
\label{fig:1e}
\end{subfigure}  
\vspace{3pt}
\begin{subfigure}{0.31\textwidth}
\includegraphics[width=0.85\linewidth]{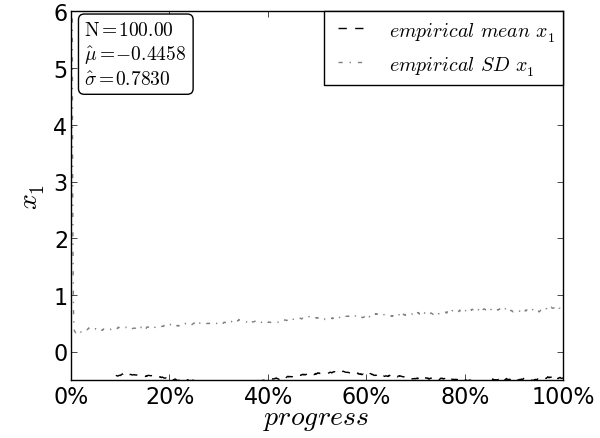}
\caption{$n=100$}
\label{fig:1f}
\end{subfigure} 
 \\
\vspace{-10pt}
\caption*{Mean values (grey) and variance (black) of $x_1$,
averaged over all walkers at a fixed time,
as a function of thinned iteration $t$.}
\vspace{5pt}

\begin{subfigure}{0.31\textwidth}
\vspace{0pt}
\includegraphics[width=0.99\linewidth]{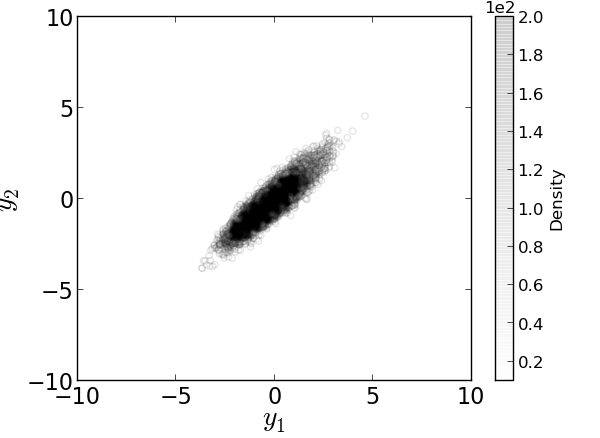}
\caption{$n=10$}
\label{fig:1g}
\end{subfigure} 
\begin{subfigure}{0.31\textwidth}
\vspace{0pt}
\includegraphics[width=0.99\linewidth]{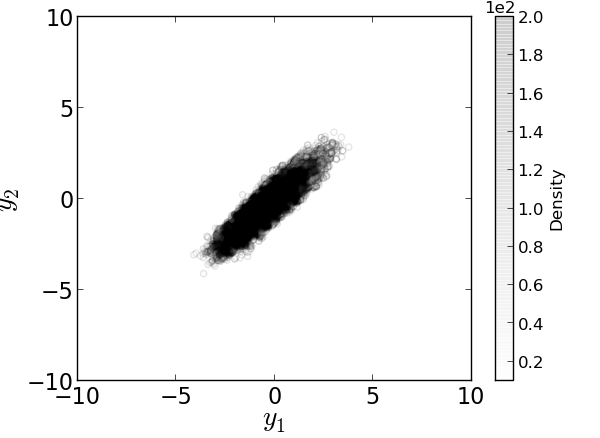}
\caption{$n=50$}
\label{fig:1h}
\end{subfigure} 
\begin{subfigure}{0.31\textwidth}
\vspace{0pt}
\includegraphics[width=0.99\linewidth]{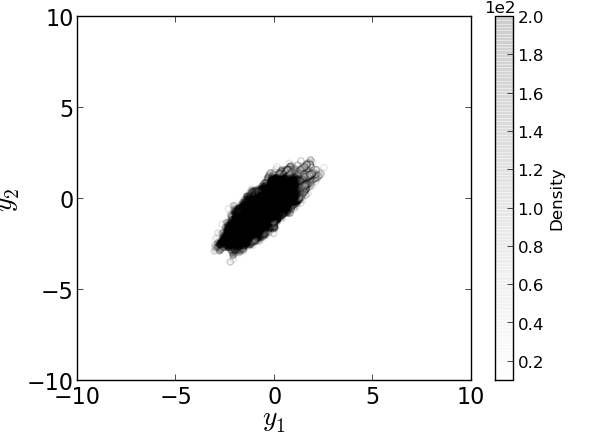}
\caption{$n=100$}
\label{fig:1i}
\end{subfigure}  
\\
\vspace{-10pt}
\caption*{Scatter plot of binned $y_1$ vs. $y_2$, which are defined as 
the average taken over all walkers for parameters $x_1$ and $x_2$ for 
the second half of the run. The marginal distribution for $y_1$ and $_2$
should resemble the left plot here. However, for the $n=100$ run,
the correlation is incorrect.}
\vspace{5pt}

\begin{subfigure}{0.31\textwidth}
\includegraphics[width=0.99\linewidth]{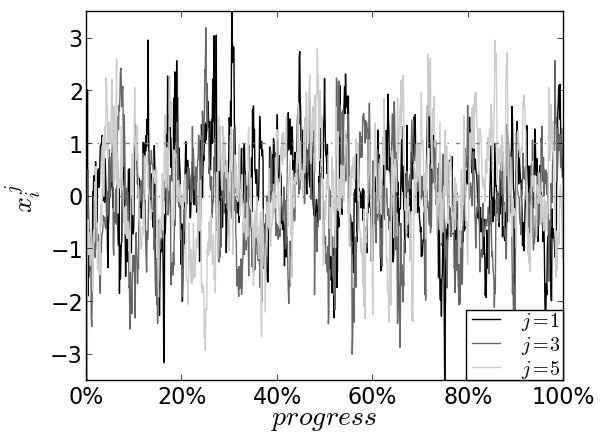}
\caption{$n=10$} 
\label{fig:1j}
\end{subfigure}
\begin{subfigure}{0.31\textwidth}
\includegraphics[width=0.99\linewidth]{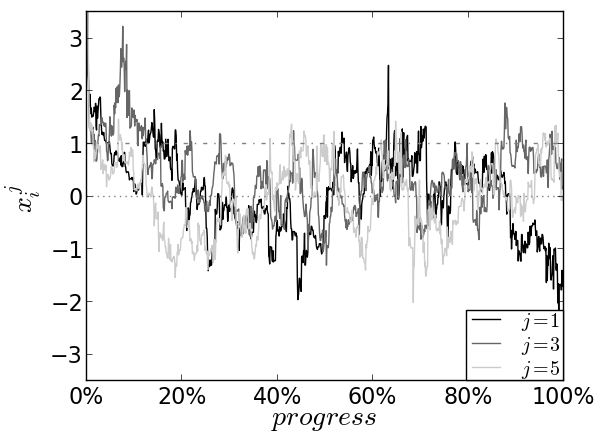}
\caption{$n=50$} 
\label{fig:1k}
\end{subfigure}  
\begin{subfigure}{0.31\textwidth}
\includegraphics[width=0.99\linewidth]{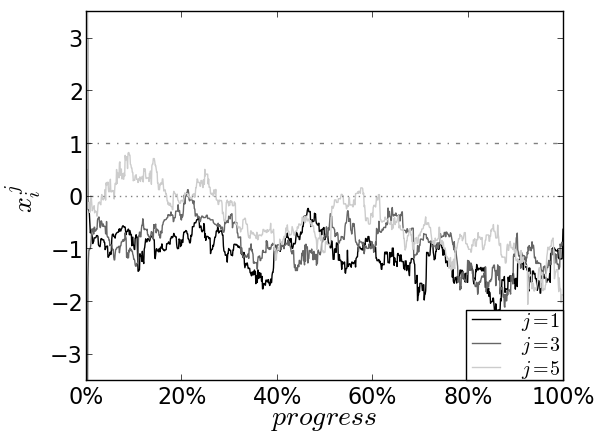}
\caption{$n=100$} 
\label{fig:1l}
\end{subfigure} 
\vspace{-10pt}
\caption*{Trace plot of the coordinate $x_1$ for different walkers where $j=1,3,5$.}
\caption{Results of an {\tt emcee} run on an AR(1) target distribution with
$\alpha = 0.9$. Each run consisted of 200,000 iterations (each iteration
being a sweep over all walkers).
The results are shown for $n=10$, $n=50$, and $n=100$ dimensional versions
of the target distribution.
}
\label{fig:fig}
\end{figure}

\begin{table}
\caption{The chosen values of the problem dimensionality $n$, along with 
the obtained mean $\hat{\mu}_n$ and standard deviation $\hat{\sigma}_n$ of the effective sample (last half of the run), where the initial conditions were generated by drawing each coordinate of each walker independently from a over-dispersed normal $N(0,10^2)$ distribution. The mean $\hat{\mu}$ and the variance $\hat{\sigma}$ were accurate
for $n=10$ and $n=50$ but not for $n=100$.}
\label{tab:gaussian_mean}
\vspace{-15pt}
\begin{center}
\begin{tabular}{r c c }
\hline
   $n$        & $\hat{\mu}_n$  &$\hat{\sigma}_n$ \\
   \hline
% \hline 
%  $n$        & $\hat{\mu}_n$  &$\hat{\sigma}_n$  \\ 
 10 & -0.0163909857436 & 1.05804127909 \\
 50 & 0.0104536594961 & 0.971905913468 \\
 100 & -0.498700028573 & 0.690744763472 \\               
       \hline
\end{tabular}
\end{center}
\end{table}

%\caption{The Multivariate PSRF obtained from Gelman-Rubin diagnostics for the Correlated Gaussian Problem, where 
%the initial conditions were generated by drawing each coordinate of each walker independently from four different normal distribution $N(0,5), N(1,5), N(-1,5), N(0,10)$. }

The output from each run consists of a three-dimensional array 
with dimensions (number of walkers, number of iterations/thinning factor,
number of dimensions).
To visualise the convergence 
properties, this array was ``flattened'' to an array which only contains the values of 
variable $x_1$.  The process of ``flattening'' reduces a two 
dimensional array which contains the values of coordinate $x_1$ of all 
walkers and at every iteration to an one dimensional array. Therefore 
the final array consists of a concatenation of $ X_1^{j=1..L}(t=0), 
X_1^{j=1..L}(t=1), X_1^{j=1..L}(t=2), \dots, X_1^{j=1..L}(t=200,000)$ in 
this specific order.

Figure~\ref{fig:1a} displays the traceplot of the 
flattened array of coordinate $x_1$ for $n=10$, where the dashed line 
displays the running average over the last $50\%$ of the elapsed time,
and the dash-dotted line displays the running standard 
deviation over the last $50\%$ of the elapsed time. The running 
average and running standard deviation were chosen because it removes 
the first $50\%$ of the ensemble, and therefore excludes more of the 
output (as potential ``burn-in'') over time.  Figure~\ref{fig:1a} 
displays a short burn-in, and it seems to come to an equilibrium quite 
quickly. Figure~\ref{fig:1d} displays empirical values of the mean of 
$x_1$ and the variance $x_1$ of all walkers as a function of time.

Figure~\ref{fig:1g} displays the joint distribution of coordinates $x_1$ and  $x_2$  of the entire ensemble for the second half of the run and it clearly shows the correlation between the two coordinates 
as expected. Figure~\ref{fig:1j} displays the trace plot of coordinate 
$x_1$ for several different walkers and it displays reasonable mixing 
and sufficient convergence. The final mean $\hat{\mu}_{n=10}$ and 
$\hat{\sigma}_{n=10}$ measured over the last $50\%$ of the chain 
displayed in Table~\ref{tab:gaussian_mean} are close to the desired values.\\

If we perform a similar analysis in higher dimensions, for example 
$n=50$, the results are close to the desired values.  Visually, the 
traceplot in Figure~\ref{fig:1b} seems to suggest that the 'fast burn-in 
period has passed at around the halfway point, and also density plot displayed in 
Figure~\ref{fig:1h} shows the correlation between the two coordinates as 
expected. Figure~\ref{fig:1e} shows that both the mean and the variance 
$x_1$ exhibit a sudden decrease from the initial standard deviation of 2.
The variance of the walkers eventually recovers, but this takes a long time.
The traceplots of the first
coordinate of for different walkers displayed in Figure~\ref{fig:1k} 
shows reasonable mixing for $n=10$ and slower mixing for $n=50$, and
$n=100$. The hope is that the large number of independent walkers
compensates for the slow movement of each walker.

For $n=100$, the results do not accurately represent the target distribution. 
To the eye, the trace plot displayed in Figure~\ref{fig:1c}
seems to show a successful MCMC run. However, the density plot of $x_1$ and $x_2$ (Figure~\ref{fig:1i}) has too small variance and correlation.
The graphs of the variance of $x_1$ for the walkers,
displayed in Figure~\ref{fig:1f}, shows that an initial sudden drop in variance
which again takes a very long time to recover. Apart from the initial fast transient, at no time
in the run of 200,000 iterations (40 million likelihood evaluations)
was the standard deviation of the
walkers' first coordinates greater than 1.
The traceplot for $n=100$ (Figure~\ref{fig:1l}) shows poor mixing, and
the estimate  $\hat{\sigma}$ is too small.\\
 
This is the main reason why the AIES should be used with caution in high
dimensions.
The output can resemble a successful run while in reality, the algorithm is still going through an initial transient phase that takes a long time.
Therefore the final sample does not represent the target distribution properly.

Roughly speaking, the burn-in process of the stretch move appears to
have two distinct stages: a fast initial transient, followed by a much slower phase. 
As we shall explain in Section~\ref{sec:taies}, the fast stage reflects convergence among the ``bulk'' of the coordinates to be consistent with the correlation structure of the AR(1) distribution.
However, the stretch moves performed during this fast stage have serious and undesirable side-effects for the ensemble of first coordinates.

\section{Convergence Diagnostics for Ensemble methods}
In theory, if a Markov chain Monte Carlo method is run for a large number of iterations, the effect of initial values will decrease to zero. Ideally the initial distribution would approach the target distribution at a certain point during the run after a relatively small number of iterations. A Markov Chain is considered converged if the probability distribution
of its state is approximately the target distribution.  In principle, the crux is to estimate the number of iterations $T$ sufficient for convergence a priori. In practise, however, it is more convenient
to try to estimate whether convergence has been achieved by
examining the output itself.
Based on the assumption that it takes $T$ iterations for the chain to converge, a chain is usually run for some number of iterations much greater than $T$ (such as $2T$) to obtain usable output.

In this section, we analyse the results from the previous section using
formal convergence diagnostics, to see whether the failure of the AIES is
detectable using these methods. Caution is required when using single-particle MCMC convergence diagnostics, since the individual walker sequences $X(t)_l$  might not be independent, or even Markovian. In general, a walker sequence and the entire ensemble do not converge at the same rate. A straightforward way to make a convergence diagnostics applicable to an ensemble method is by using a function which combines the information from all walkers into a single number, and apply the diagnostics to the obtained results. 
While the sequence of values of this summary function does not have the Markov
property, much of the reasoning behind convergence tests still applies,
at least approximately.
The obvious choice for this function would be the average or the variance of a
coordinate, taken over all the walkers.  At each iteration $t$, for each run $m$, and each parameters $n$ the average taken over all walkers is defined by:
\begin{equation}
\hat{\mu}(t)^{(i)}_m =  \frac{1}{L} \sum_{l=1}^{L}  \overline{X}(t)^{(i)}_{l,m}
\label{eq:average_over_walkers}
\end{equation} 
and the variance is defined by 
\begin{equation}
\hat{\sigma}(t)^{(i)}_m =  \frac{1}{L} \sum_{l=1}^{L}  (\overline{X}(t)^{(i)}_{l,m}  - \hat{\mu}(t)^{(i)}_m)^2 
\label{eq:variance_over_walkers}
\end{equation} 
where $i$ indicate the parameters, $t$ the iteration and $m$ indicates the run. This should enable users to apply any single particle MCMC diagnostics to the obtained results. \\
\indent  The Gelman-Rubin \cite{GelRub1992} method is a widely accepted diagnostic tool for assessment of MCMC convergence. However, it is designed to be applied to a single-particle method. Therefore the two functions mentioned before are used for the analysis. Since there might be a correlation between different parameters the method presented in this paper is based on the multivariate approach \cite{BroGel1998}. The Gelman-Rubin method is based on $M \geq 2$ independent chains, whose initial conditions were drawn from $M$ different overly-dispersed distributions. \\
   The process starts with independently simulating these $M$ chains, and discarding the first $T$ iterations.\\
After that, matrices $\mathbf{B}$ and $\mathbf{W}$ are constructed from $\mathbf{y}(t)^{(i)}_m$ which contains the results of any appropriate function applied to the walkers. The two functions chosen for $\mathbf{y}(t)^{(i)}_m$ in this paper are the averages over the walkers as defined in equation \ref{eq:average_over_walkers} and the variance over the walkers as defined in equation \ref{eq:variance_over_walkers}. 
Here $M$ indicates the number of chains, and $T$ the number of iterations.  Matrix $\mathbf{B}/T$ is the $n$-dimensional between-sequence covariance matrix estimate of the $n$ dimensional function values taken over all walkers $\mathbf{y}$:
\begin{equation}
 \mathbf{B}/T = \frac{1}{M-1} \sum^M_{j=1} (\mathbf{\overline{y}}_{j\cdot} - \mathbf{\overline{y}}_{\cdot \cdot} )(\mathbf{\overline{y}}_{j\cdot} - \mathbf{\overline{y}}_{\cdot \cdot} )'
% \overline{y}_{\cdot \cdot} - \overline{y}_{j\cdot}
  % (\mathbf{\overline{y}}_{j\dot} - \mathbf{\overline{y}}_{\dot \dot} ) %(\mathbf{\overline{y}}_{j\dot} - \mathbf{\overline{y}}_{\dot \dot} )'
  \label{eq:variance_between_chains}
   \end{equation}   

Matrix $\mathbf{W}$ is the within-sequence covariance matrix estimate of the $n$ dimensional average of the walkers $\mathbf{y}$:
\begin{equation}
 \mathbf{W} = \frac{1}{M(T-1)} \sum^M_{j=1} \sum^{T}_{t=1} (\mathbf{\overline{y}}_{jt} - \mathbf{\overline{y}}_{\j\cdot} )(\mathbf{\overline{y}}_{jt} - \mathbf{\overline{y}}_{j\cdot} )'
% \overline{y}_{\cdot \cdot} - \overline{y}_{j\cdot}
  % (\mathbf{\overline{y}}_{j\dot} - \mathbf{\overline{y}}_{\dot \dot} ) %(\mathbf{\overline{y}}_{j\dot} - \mathbf{\overline{y}}_{\dot \dot} )'
  \label{eq:variance_with_chains}
   \end{equation}   

Using the previously defined matrices one can calculate $\mathbf{\hat{V}}$ which is the estimate of the posterior variance-covariance matrix
   \begin{equation}
 \mathbf{\hat{V}} = \frac{T-1}{T}\mathbf{W}+ \left(\frac{M+1}{M} \right)\frac{\mathbf{B}}{T}
  \label{eq:posterior_variance_covariance}
   \end{equation}

The quantity of interest to establish convergence is the rotationally invariant distance measure between
$\mathbf{\hat{V}}$ and $\mathbf{W}$  \cite{BroGel1998}. This distance measure is the maximum scale reduction factor (SRF) of any linear projection of $\mathbf{y}$, and is given by
\begin{equation}
\hat{R}^n = \frac{T-1}{T} + \left(\frac{M+1}{M} \right) \lambda_1
\label{eq:MPSRF}
\end{equation}
where $\lambda_1$ is the largest eigenvalue of the positive matrix 
\begin{equation}
\mathbf{W^{-1}} \mathbf{B}/T
\label{eq:pos_def}
\end{equation}. The multivariate potential scale reduction factor $\hat{R}^n$ should approach 1 from above as $\lambda_1 \rightarrow 0$ for convergence. These computations are impossible if $\mathbf{W}$ is a singular matrix, and the results will suffer severe inaccuracies if $\mathbf{W}$ is close to being singular. 
%This method might also have problems if the numbers of parameters is too high or the parameters are highly correlated. 
The standard method to obtained the eigenvalues of $\mathbf{W^{-1}B}$ involves solving $\mathbf{B} = \mathbf{WX}$, however using standard software packages like \verb"eigen" in \verb"R" or \verb"numpy.linalg.eigvals" in \verb"Python" are likely to suffer from numerical instability. For efficiency and numerical stability our analysis uses a Cholensky decomposition similar to the Gelman-Rubin diagnostics in CODA \cite{CODA2006}.\\
%Cholensky decompositions of $\mathbf{W}$ allows us to obtain the upper triangular matrix $\mathbf{C}$ which satisfy $\mathbf{C^T C =W }$. This enable us to solve matrix equation $\mathbf{W X = B}$, which is derived from $\mathbf{X = W^{-1}B}/T$.  First solve $\mathbf{C^T Y = B}$, and then solve $\mathbf{C^T X = Y^T}$, which allows us to obtain $\mathbf{Y}$ from which the eigenvalues are easily obtained. 
\indent The starting distribution can still influence the final distribution after many iterations \cite{GelRub1992}.
Therefore the Gelman-Rubin method demands the starting distribution to be over-dispersed.
%, which means that the initial distribution is wider than the target distribution. 
In an ensemble method this condition is more subtle.  One can sample each walker in such way it represents an overly-dispersed distribution, however if you do this for each of the $M$ runs used in the Gelman-Rubin method each run will still represent the same distribution, and it will be challenging for the Gelman-Rubin method to determine any lack of convergence. Therefore we propose that the starting positions of the walkers of each run are sampled from $M$ different distributions. This should enable the method to detect if each of the ensembles migrates toward the target distribution.   \\

\begin{table}[!ht]
\caption{Results of diagnostics for the Correlated Gaussian Problem, where 
the initial conditions were generated by drawing each coordinate of each walker independently from four different normal distributions --- $N(0,5^2), N(1,5^2), N(-1,5^2), N(0,10^2)$. The results of the diagnostics applied to the mean over the walkers $\hat{\mu}$ and the variance over the walkers $\hat{\sigma}$, which show  the Multivariate Ensemble PSRF  obtained from Gelman-Rubin diagnostics and Heidelberger-Welch (CODA).  }
\label{tab:gaussian_MPSRF}
\vspace{-15pt}
\begin{center}
\begin{tabular}{r| c c|  c c }
$n $  & $ \hat{R}_{\hat{\mu}}^{n}$ & $\hat{R}_{\hat{\sigma}}^{n}$  &  H-W $\hat{\mu}$  &H-W $\hat{\sigma}$   \\
\hline
 10&  1.005    &  1.009 & PASSED  & PASSED  \\
 50&  1.233    &  1.121 & PASSED & FAILED \\
100 & 2.238    &  1.688 & PASSED & FAILED
\end{tabular}
\end{center}
\end{table}

For each set of parameters $n=10$, $n=50$, and $n=100$ we performed 4 independent runs with 4 different initial conditions drawn from Gaussian distributions $N(0,5^2)$, $N(1,5^2)$,$N(-1,5^2)$ and $N(0,10^2)$. 
 The values of the multivariate potential scaled reduction factors $\hat{R}^n_{\hat{\mu}}$ and $\hat{R}^n_{\hat{\sigma}}$ are displayed in table \ref{tab:gaussian_MPSRF}.
For $n=10$ both $\hat{R}^n_{\hat{\mu}}$ and $\hat{R}^n_{\hat{\sigma}}$ are close to one which suggests convergence was achieved.% which can potentially lead to incorrect inference. 
For $n=50$, $\hat{R}^n_{\hat{\mu}}$ and $\hat{R}^n_{\hat{\sigma}}$ are somewhat close to 1. However, for $n=100$ $\hat{R}^n_{\hat{\mu}}$ and $\hat{R}^n_{\hat{\sigma}}$ are much greater than one, and indicates a strong lack of convergence, in agreement with the conclusions reached
in the previous section.

As an additional convergence diagnostic, the Heidelberger-Welch-test, implemented in the CODA package\cite{CODA2006}  in \verb"R", was chosen.  The CODA-implementation performs two tests: The Heidelberger-Welch-test and the half-width test.  The Heidelberger-Welch-test is a convergence test which uses Cramer-von-Mises statistic to test the null hypothesis that the sampled values come from a stationary distribution \cite{CODA2006}. The test is initially applied to the whole chain, however if the chain fails the test a percentage at the beginning of the chain is discarded. This process is repeated until either the test is passed or 50\% percent is discarded. 
The half-width test calculates a 95\% confidence interval for the mean, using the portion of the chain which passed the stationarity test, and the calculates the half the width of this interval which is compared with the estimate of the mean. 
If the ratio between the half-width and the mean is lower than eps, the halfwidth test is passed \cite{CODA2006}. 
In this research the results of the halfwidth test are considered of little interest, because they are very subjective due to the dependence on the choice of the $\epsilon$-parameter value. 

%If a chain does not pass the test the 

%e convergence test uses the Cramer-von-Mises statistic to test the null hypothesis that the sampled values come from a stationary distribution. The test is successively applied, firstly to the whole chain, then after discarding the first 10%, 20%, ... of the chain until either the null hypothesis is accepted, or 50% of the chain has been discarded. The latter outcome constitutes ‘failure’ of the stationarity test and indicates that a longer MCMC run is needed. If the stationarity test is passed, the number of iterations to keep and the number to discard are reported.

 %Since there are four chains the results displayed in table   \ref{tab:gaussian_MPSRF} will indicated four true or false values.  
The Heidelberger-Welch-test is applied to chains of the mean $\hat{\mu}$ and the variance $\hat{\sigma}$ as defined in equations \ref{eq:average_over_walkers} and \ref{eq:variance_over_walkers} where the first 50 \% is already discarded. Therefore we consider this test passed only if it passes this without discarding any more of the chain.  
 Only for $n=10$ did both the mean $\hat{\mu}$ and $\hat{\sigma}$ chains pass the Heidelberger-Welch test without discarding any part of the chain beyond the 100,000 iteration burn-in. These results are summarized in table \ref{tab:gaussian_MPSRF}. %Even though this is very subjective the every model has at least one parameters that does not pass the halfwidth test with a choice of $1$, , however the results strongly depends on the choice of the parameters $eps$ therefore the results of the halfwidth test are regarded very subjective. 

  %The test passed if all parameters are from a stationary distribution, and false if at least one parameters is not from a stationary distribution.  The results does not incorporate the results regarding the halfwidth mean and  halfwidth test.    
Even though it is good practice for every user of MCMC methods to use some convergence diagnostics on the obtained samples, we strongly suggest that in the case of the AIES it is not only good practice, but a necessity, since visual inspection of the results might be deceiving in high dimensions. \\

%In table \ref{tab:gelman_rubin} are the uncorrected PSRF $\hat{R}$  displayed for three different $n$, in the same table the corrected SRF according to \cite{BroGel1998}  $\hat{R}_C$ is  displayed. The CSRF has an multiplicative factor $(d+3 )/(d+1)$ which accounts for the degrees of freedom. \\

%  Letting $\mu_{jtv}$ denote the $t^{th}$ of the $n$ iterations of $\psi$ in the chain $j$ of the $v^{th}$ walker , we take $\hat{\mu} = \overline{\psi}_{...}$, which is the mean over all the walkers across all iterations over all the chains, and calculate the between sequence variance $B/n$, and the within-sequence variance $W$ defined by 
%  \begin{equation}
%  \begin{split}
%  B &= \frac{n}{m-1} \sum_{j=1}^m (\mu_{jtv} -  \overline{\psi}_{...})^2 \\
%    \label{eq:Gelman-Rubin1}
%  \end{split}
%  \end{equation}
%WHICH   
%
%  \begin{equation}
%  \begin{split}  
%  W &= \frac{1}{m(n-1)} \sum_{j=1}^m \sum_{t=1}^n  (\mu_{jt} -  \overline{\psi}_{j.})^2 \\
%  \label{eq:Gelman-Rubin1}
%  \end{split}
%  \end{equation}

\section{Rosenbrock example}\label{sec:rosenbrock}
As a second test on the convergence properties of the stretch move,
we investigated an $n$-dimensional generalisation of the Rosenbrock density
\citep{Ros1960}, in the form proposed by \cite{DixMil1994}. The target
density is 
\begin{equation}
f(\mathbf{x}) = \sum_{i=1}^{n/2} 100 ( x_{2i-1}^2 - x_{2i})^2 + (x_{2i-1} - 1)^2.
\label{eq:Rosenbrock} 
\end{equation}
%\begin{table}

This is simply $n/2$ independent replications of a two-dimensional
Rosenbrock density, and should be fairly straightforward to sample.
Again, we used 200,000 iterations, but increased the number of walkers
to $L=10n$. We tested dimensionalities of $n=10$, $n=50$, and $n=100$,
so the corresponding overall numbers of likelihood evaluations were
$2 \times 10^7$, $1 \times 10^8$, and $2 \times 10^8$ respectively.

Inspection of the graphs displayed in \ref{fig:Rosenbrock} show proper mixing for $n=10$, however the mixing for $n=50$ and $n=100$ is far from desirable. Upon visual inspection the flattened traceplots for all dimensions  \ref{fig:Rosenbrock_1a}, \ref{fig:Rosenbrock_1b}, \ref{fig:Rosenbrock_1c}  shows  no indicator of convergence problems. 
 The plots of the running variance and running mean 
\ref{fig:Rosenbrock_1d}, \ref{fig:Rosenbrock_1e}
seem to indicate slow but steady convergence, however \ref{fig:Rosenbrock_1f}  doesn
t look promising. \\
A good indicator for convergence is provided from the binned scatter plots of the means of two parameters taken over the walkers  \ref{fig:Rosenbrock_1g}, \ref{fig:Rosenbrock_1h}, \ref{fig:Rosenbrock_1i}. This suggests for $n=10$ to target distribution is properly sampled, however the plots for $n=50$ and $n=100$ display  some features which might indicate convergence issues.
The individual traceplots for different parameters \ref{fig:Rosenbrock_1j}, \ref{fig:Rosenbrock_1k} and \ref{fig:Rosenbrock_1l} fail to reveal any underlying problems. \\
    The minimum scaled reduction factors  $\hat{R}^n_{\hat{\mu}}$ and $\hat{R}^n_{\hat{\sigma}}$ displayed in table \ref{tab:rosenbrock_MPSRF} suggests a lack of convergence for all models which is unexpected for $n=10$ which seem to converge properly according to the graphs displayed in \ref{fig:Rosenbrock_1a} and \ref{fig:Rosenbrock_1g}. 

While the AIES fails (at least for $n=50$ and $n=100$), other methods succeed on this problem. Simple single-particle Metropolis, with
a scale mixture of gaussians (around the current position) as the
proposal, succeeds with the equivalent computational cost
($2 \times 10^8$ likelihood evaluations), and
produces about 100 effectively independent samples, by inspection
of the empirical autocorrelation function. Of course, if the
AIES had converged, the final state of the walkers would have yielded 1000
independent samples.

\begin{table}[!ht]
\caption{The Multivariate PSRF obtained from Gelman-Rubin diagnostics for the Rosenbrock Problem, where 
the initial conditions were generated by drawing each coordinate of each walker independently from four different normal distribution $N(0,5), N(1,5), N(-1,5), N(0,10)$. The results of the diagnostics applied to the mean over the walkers $\hat{\mu}$ and the variance over the walkers $\hat{\sigma}$, which show  the Multivariate PSRF  obtained from Gelman-Rubin diagnostics and Heidelberger-Welch (CODA).}
\label{tab:rosenbrock_MPSRF}
\begin{center}
\begin{tabular}{r | c c| c  c }
$n $  & $ \hat{R}_{\hat{\mu}}^{n}$ & $\hat{R}_{\hat{\sigma}}^{n}$  & H-W $\hat{\mu}$  &H-W $\hat{\sigma}$   \\
\hline
 10&    1.74      &   2.11    &  FAILED &  PASSED \\
 50&   216        &  137      &  FAILED  &  FAILED \\
100 &   930    &   330     &  FAILED  &  FAILED \\ 
\end{tabular}
\end{center}
 \end{table}

%\begin{table}
%\caption{Calculated mean values and variances of the extended Rosenbrock model. }
%\label{tab:rosenbrock_mean_and_variance}
%\vspace{-15pt}
%\begin{center}
%\begin{tabular}{l r r r r r r r r r r}
%\hline
%    & $n$        &  $x_1$ &  $x_2$ &  $x_3$ &  $x_4$ &  $x_5$ &  $x_6$ &  $x_7$ &  $x_8$\\   
%   \hline
%      \hline
% Mean values: & $n=2$  &  0.931    &  1.442 &&&&&& \\
% Variance values:& $n=2$  &  0.576 & 2.345 &&&&&& \\
% \hline     
% Mean values:& $n=4$  &  1.108  & 1.665 &  1.129 & 1.734 &&&&\\
% Variance values: & $n=4$  & 0.436  &2.487 & 0.461  &2.755&&&&\\
%\hline
% Mean : & $n=8$  &        0.611 & 0.804& 0.856  &1.119&  0.631&  0.898&  0.838&  0.984\\
%  Variance :&  $n=8$  &   0.434& 0.632& 0.390&   1.636&  0.503&  1.164& 0.281  &  0.999\\
%\end{tabular}
%\end{center}
%\end{table}

%
%
%{\bf BJB: There is no text that refers to the figure. You should also
%            include the true marginal means and standard deviations somewhere.\\
%     DH: I am not sure what you mean with 'true means' and 'standard deviations '.
%          I included a table with the obtained means and variances for each parameters, I hope this will suffice. \\                             }

\begin{table}
\caption{Estimates of the expected value and standard deviation
of $x_1$ in the Rosenbrock problem. The true values are approximately
1.0 and 0.7.
}
\label{tab:rosenbrock_mean}
\begin{center}
\begin{tabular}{ r c c }
 \hline 
  $n$        & $\hat{\mu}_n$  &$\hat{\sigma}_n$  \\ 
    10 & 0.912880664359 & 0.775798053317 \\
  50 & 0.219674347615 & 0.690989906816 \\
  100 & 0.397340115775 & 0.741348730082 \\   
  \hline
\end{tabular}
\end{center}
 \end{table}

\newpage
\begin{figure}
\begin{subfigure}{0.31\textwidth}
\includegraphics[width=0.94\linewidth]{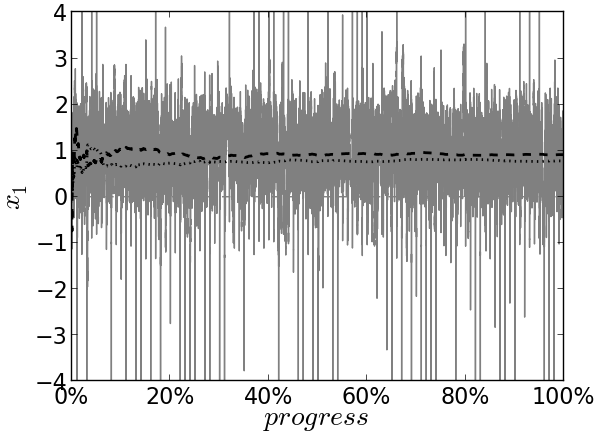}
\caption{$n=10$}
\label{fig:Rosenbrock_1a}
\end{subfigure}
\begin{subfigure}{0.31\textwidth}
\includegraphics[width=0.94\linewidth]{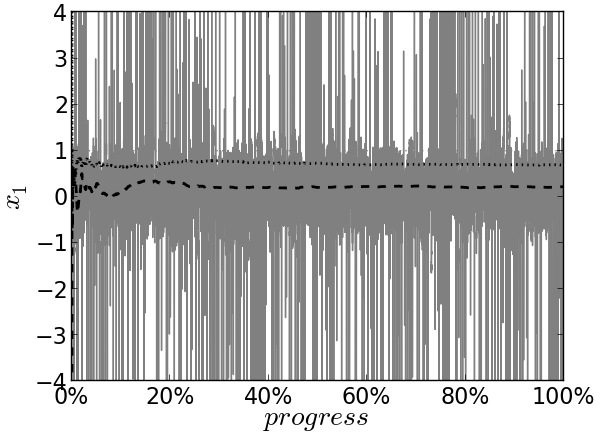}
\caption{$n=50$}
\label{fig:Rosenbrock_1b}
\end{subfigure}
\begin{subfigure}{0.31\textwidth}
\includegraphics[width=0.94\linewidth]{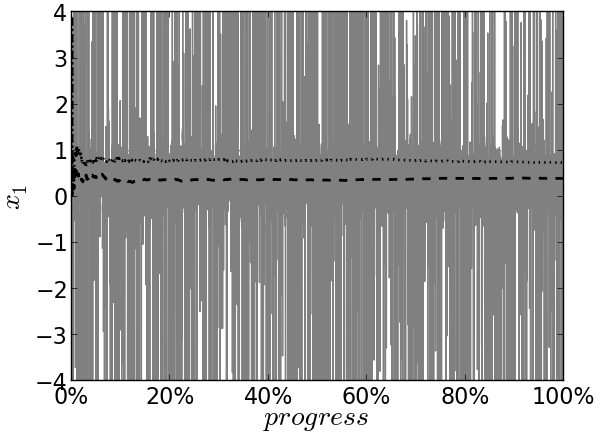}
\caption{$n=100$}
\label{fig:Rosenbrock_1c}
\end{subfigure}
\\
\vspace{-10pt}
\caption*{Graphs of the flattened trace plots of the first coordinate $x_1$ for $n=10$, $n=50$ and $n=100$, and  and $t=200.000$.}
\vspace{5pt}
\begin{subfigure}{0.31\textwidth}
\includegraphics[width=0.94\linewidth]{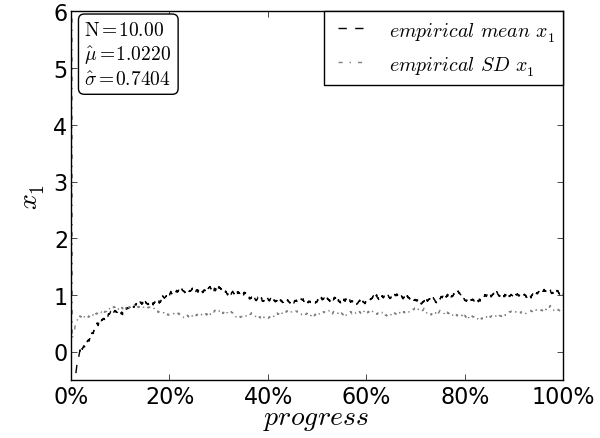}
\caption{$n=10$}
\label{fig:Rosenbrock_1d}
\end{subfigure}  
\vspace{3pt}
\begin{subfigure}{0.31\textwidth}
\includegraphics[width=0.94\linewidth]{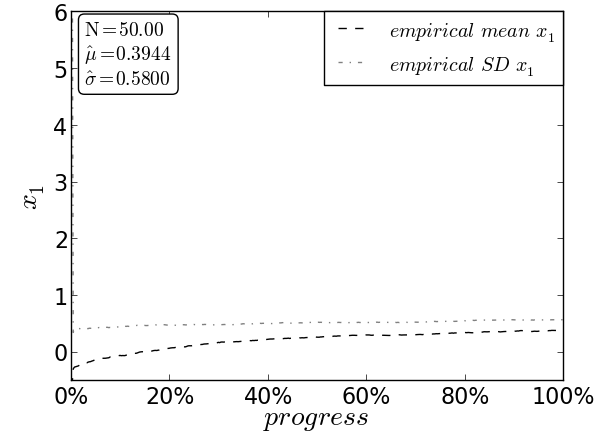}
\caption{$n=50$}
\label{fig:Rosenbrock_1e}
\end{subfigure}  
\vspace{3pt}
\begin{subfigure}{0.31\textwidth}
\includegraphics[width=0.94\linewidth]{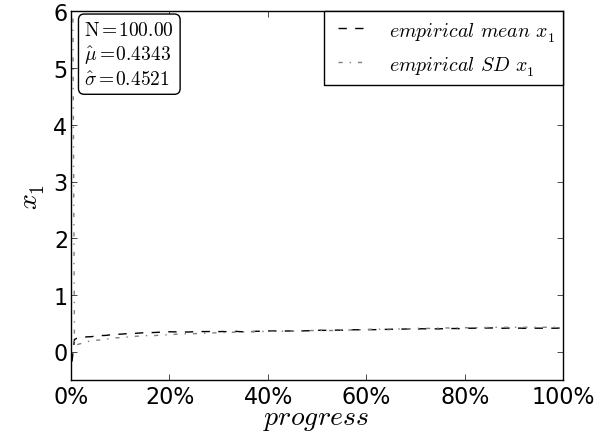}
\caption{$n=100$}
\label{fig:Rosenbrock_1f}
\end{subfigure} 
 \\
\vspace{-10pt}
\caption*{Mean values (grey) and variance (black) of $x_1$ over all walkers as a function of iterations $t$.}
\vspace{5pt}
\hspace{-5pt}

\begin{subfigure}{0.31\textwidth}
\includegraphics[width=0.99\linewidth]{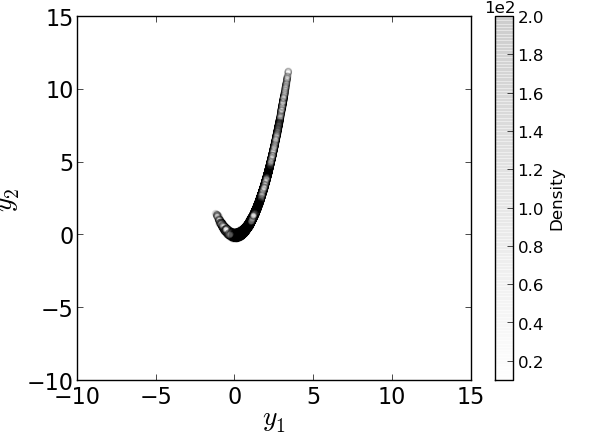}
\caption{$n=10$}
\label{fig:Rosenbrock_1g}
\end{subfigure} 
\begin{subfigure}{0.31\textwidth}
\vspace{0pt}
\includegraphics[width=0.99\linewidth]{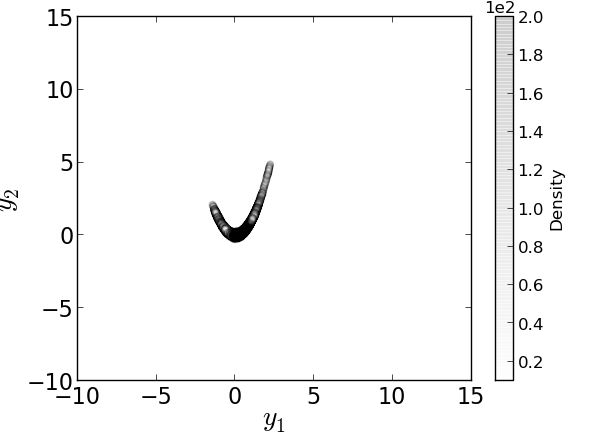}
\caption{$n=50$}
\label{fig:Rosenbrock_1h}
\end{subfigure} 
\begin{subfigure}{0.31\textwidth}
\vspace{0pt}
\includegraphics[width=0.99\linewidth]{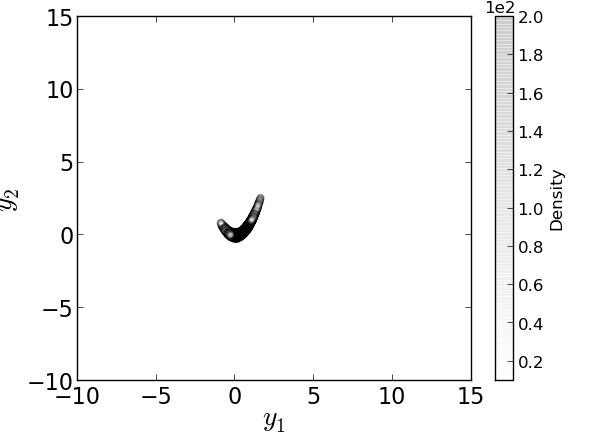}
\caption{$n=100$}
\label{fig:Rosenbrock_1i}
\end{subfigure}  
\\
\vspace{-10pt}
\caption*{Density plot of coordinates $x_1$ and $x_2$ from the second
half of the run.}
\vspace{5pt}
\hspace{-5pt}
\begin{subfigure}{0.31\textwidth}
\includegraphics[width=0.99\linewidth]{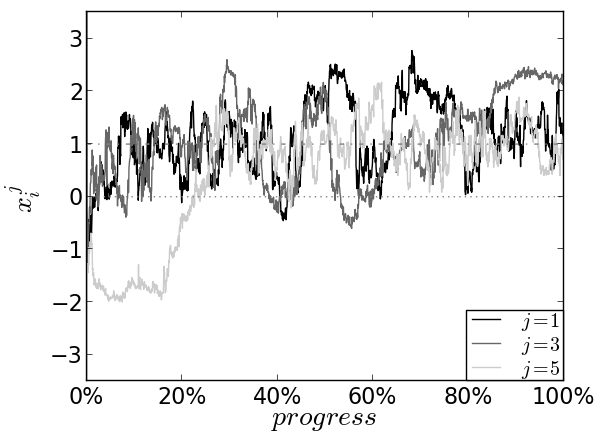}
\caption{$n=10$} 
\label{fig:Rosenbrock_1j}
\end{subfigure}
\begin{subfigure}{0.31\textwidth}
\includegraphics[width=0.99\linewidth]{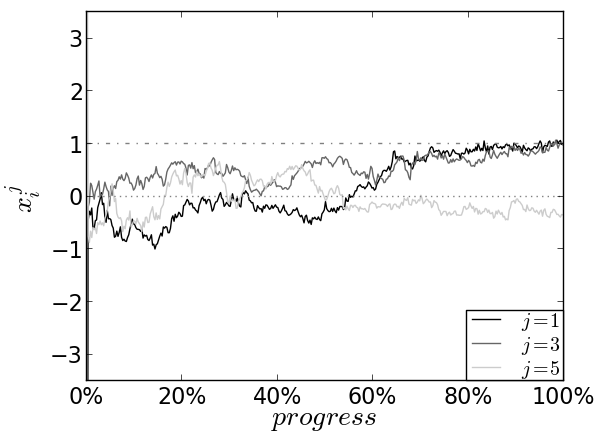}
\caption{$n=50$} 
\label{fig:Rosenbrock_1k}
\end{subfigure}  
\begin{subfigure}{0.31\textwidth}
\includegraphics[width=0.99\linewidth]{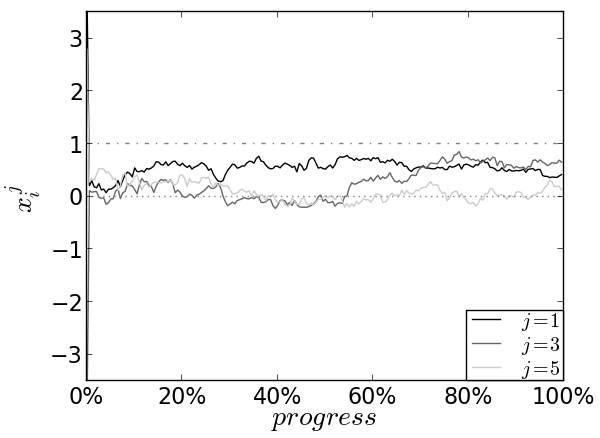}
\caption{$n=100$} 
\label{fig:Rosenbrock_1l}
\end{subfigure} 
\vspace{-10pt}
\caption*{Trace plot of the coordinate $x_1$ for different walkers where $j=1,3,5$.}
\caption{Results of an MCMC run of a Rosenbrock distribution using
200,000 iterations, for dimensionalities of $n=10, 50, 100$.
The number of walkers was $10n$ in each case. }
\label{fig:Rosenbrock}
\end{figure}

\section{Theoretical causes of the behaviour of the AIES for sampling a high dimensional Gaussian}\label{sec:taies}

The results in Section~\ref{sec:gaussian} suggest studying the limiting behaviour of the AIES in an appropriate limit as $n\to\infty$.
This limit is somewhat complicated, not least because it requires the number of walkers to be large since $L\geq n+1$.
We begin with a description of the AIES in the limit $L\to\infty$ for fixed $n$ and $\pi$.
Then we examine a single AIES move in the limit $n\to\infty$ under a simplifying assumption.
We then give a non-rigorous heuristic for the limit $n\to\infty$ that explains the behaviour described in Section~\ref{sec:gaussian}.

\subsection{The AIES with many walkers}\label{ss:AIESmanyWalkers}

To study the limit $L\to\infty$, it is convenient to use the continuous-time variant where the main and complementary walkers are selected uniformly among all walkers.
Then the $L$ walkers play symmetric roles and it is natural to collect them into the \emph{empirical measure}
\begin{equation}
\mu^{(L)}(t)=\frac{1}{L}\sum_{j=1}^L \delta_{\mathbf{X}^{(j)}(t)},
\end{equation}
where $\delta_\mathbf{x}$ denotes the measure placing unit mass at $\mathbf{X}\in\mathbb{R}^n$.
In words, the measure $\mu^{(L)}(t)$ encodes the distribution of a uniformly chosen walker at time $t$.
Because of the assumption that both walkers $X$ and $Y$ are selected uniformly, it follows that $\mu^{(L)}(t)$ is itself a Markov chain.

\begin{prop}
\label{prop:AIESmany}
Choose the initial walkers independently according to the distribution $\mu_0$.  
Then, in the limit $L\to\infty$, the empirical measure process $\mu^{(L)}(t)$ converges in distribution to a \emph{deterministic} path $\mu_t$ with initial value $\mu_0$ and 
\begin{equation}\label{AIESmanyODE}
\frac{d}{dt}\int f(\mathbf{x}) d\mu_t(\mathbf{x}) = \iint_{\R^n\times\R^n} \E\Big(  f\big(Z\mathbf{x}+(1-Z)\mathbf{y}\big)-f(\mathbf{x}) \Big) p(\mathbf{x},\mathbf{y},\mathbf{z}) d\mu_t(\mathbf{x}) d\mu_t(\mathbf{y})
.
\end{equation}
\end{prop}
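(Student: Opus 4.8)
The plan is to read Proposition~\ref{prop:AIESmany} as a mean-field (propagation-of-chaos) limit for a pure-jump interacting particle system and to prove it by the standard martingale route: derive a semimartingale decomposition for the empirical measure, prove tightness of the laws of $\mu^{(L)}(\cdot)$, identify every subsequential limit as a solution of the weak equation~\eqref{AIESmanyODE}, and finally prove uniqueness of that equation, so that the limit is deterministic and the whole sequence converges.

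First I would compute the generator $\mathcal{L}_L$ of the $L$-walker continuous-time chain acting on linear functionals $F_f=\langle f,\mu^{(L)}\rangle=\frac1L\sum_i f(\mathbf{x}_i)$ with $f\in C_b(\R^n)$. Since moves arrive at total rate $L$ with main walker $k$ and complementary walker $j\neq k$ both uniform, a single accepted move replaces $\mathbf{x}_k$ by $Z\mathbf{x}_k+(1-Z)\mathbf{x}_j$ and changes $F_f$ by $\frac1L(f(\widetilde{\mathbf{x}}_k)-f(\mathbf{x}_k))$, so
\[
\mathcal{L}_L F_f = \frac{1}{L(L-1)}\sum_{k\neq j}\Phi_f(\mathbf{x}_k,\mathbf{x}_j),\qquad \Phi_f(\mathbf{x},\mathbf{y})=\E_Z\bigl[p(\mathbf{x},\mathbf{y},Z)\bigl(f(Z\mathbf{x}+(1-Z)\mathbf{y})-f(\mathbf{x})\bigr)\bigr].
\]
The key simplification is that $\Phi_f(\mathbf{x},\mathbf{x})=0$, because $Z\mathbf{x}+(1-Z)\mathbf{x}=\mathbf{x}$; hence the diagonal terms vanish and $\mathcal{L}_L F_f = \frac{L}{L-1}\iint \Phi_f\,d\mu^{(L)}d\mu^{(L)}$, with prefactor tending to $1$. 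Dynkin's formula then gives $\langle f,\mu^{(L)}(t)\rangle=\langle f,\mu^{(L)}(0)\rangle+\int_0^t \frac{L}{L-1}\iint\Phi_f\,d\mu^{(L)}(s)\,d\mu^{(L)}(s)\,ds+M^{(L)}_f(t)$, and since each jump perturbs $\langle f,\mu^{(L)}\rangle$ by at most $2\|f\|_\infty/L$ while jumps occur at rate $L$, the martingale $M^{(L)}_f$ has $\E[(M^{(L)}_f(T))^2]=O(\|f\|_\infty^2/L)\to0$.

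Next I would establish tightness of $(\mu^{(L)}(\cdot))_L$ in $D([0,T],\mathcal{P}(\R^n))$ via a Jakubowski/Aldous--Rebolledo criterion: tightness of the scalar processes $\langle f,\mu^{(L)}(\cdot)\rangle$, for $f$ in a convergence-determining class, follows from the uniform drift bound $|\Phi_f|\le 2\|f\|_\infty$ (using $p\le1$) and the vanishing quadratic variation above, while compact containment follows from a uniform moment bound. The latter uses that a stretch move replaces a walker by a combination with coefficients controlled through $Z\in[1/a,a]$, so $\int\abs{\mathbf{x}}^2\,d\mu^{(L)}_t$ grows at most linearly in the current moments; Gronwall's inequality yields bounds uniform in $L$ on $[0,T]$, preventing escape of mass to infinity. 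Along any convergent subsequence $\mu^{(L)}\Rightarrow\mu$, the continuity of $\mu\mapsto\iint\Phi_f\,d\mu\,d\mu$ (valid because $\Phi_f$ is bounded and continuous) lets me pass to the limit in the drift while $\E[(M^{(L)}_f)^2]\to0$ removes the martingale part, so every limit point satisfies~\eqref{AIESmanyODE} in integrated weak form. Writing the right-hand side as a quadratic map $\mu\mapsto Q(\mu,\mu)$ and showing it is Lipschitz in a bounded-Lipschitz dual metric on the moment-constrained measures, a final Gronwall argument gives uniqueness of solutions with initial data $\mu_0$, which upgrades subsequential convergence to convergence of the full sequence to the deterministic path $\mu_t$.

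I expect the main obstacle to be the regularity of the interaction kernel $\Phi_f$. Because the acceptance probability~\eqref{eq:AcceptanceProb} carries the density ratio $\pi(\widetilde{\mathbf{x}})/\pi(\mathbf{x})$ together with the nonsmooth $\min(1,\cdot)$ and the factor $Z^{n-1}$, the boundedness, continuity and Lipschitz estimates on $\Phi_f$ that both the identification and the uniqueness steps demand are exactly where the real work lies. The bound $p\le1$ saves boundedness, but continuity and Lipschitzness of the kernel require $\pi$ to be continuous and strictly positive; for heavy-tailed or vanishing $\pi$ the ratio is not globally controlled, so one must combine the density-ratio bounds with the tightness moment estimates to control the kernel on the relevant sets, or simply specialize to targets such as the Gaussian of Section~\ref{sec:gaussian} for which these properties hold.
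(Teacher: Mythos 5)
Your proposal is sound in outline, and it supplies far more than the paper itself does: the paper states Proposition~\ref{prop:AIESmany} \emph{without proof}, offering only a heuristic paragraph --- each move shifts a fraction $1/L$ of the empirical measure toward $\mu'_t$, this is offset by the jump rate $L$, and determinism ``is established by examining products of empirical means.'' That last phrase points at a moment-hierarchy argument: show that covariances of pairs of empirical means vanish as $L\to\infty$, so the limiting correlation functions factorize and satisfy a closed deterministic hierarchy. Your route is instead the textbook propagation-of-chaos / martingale-problem argument: generator on linear functionals, Dynkin decomposition with a martingale whose bracket is $O(\|f\|_\infty^2 T/L)$, Aldous--Rebolledo tightness plus compact containment, identification of limit points with the weak form of \eqref{AIESmanyODE}, and Gronwall uniqueness. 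The underlying second-moment computation is the same in both approaches, but your organization buys something the paper's sketch does not: convergence of the \emph{full} sequence (via uniqueness of the limiting equation) rather than mere concentration of each empirical mean. Your generator computation is correct, including the rate bookkeeping ($L\cdot\frac{1}{L}\cdot\frac{1}{L-1}$ per ordered pair), and the observation that $\Phi_f$ vanishes on the diagonal --- so the off-diagonal sum equals $\frac{L}{L-1}$ times the double integral against $\mu^{(L)}\otimes\mu^{(L)}$ --- is exactly the right way to close the drift into a function of the empirical measure alone.

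Two caveats, both of which you at least partially flag, deserve emphasis. First, your compact-containment argument silently requires $\int\abs{\mathbf{x}}^2\,d\mu_0(\mathbf{x})<\infty$, an assumption absent from the statement (the proposition as printed carries no hypotheses on $\mu_0$ or on $\pi$ at all). Second, the uniqueness step is where the statement as given cannot be proved in full generality: for arbitrary $\pi$ the kernel $\Phi_f$ need not even be continuous, and even for the Gaussian targets of Section~\ref{sec:gaussian} the exponent $h$ in the acceptance probability grows quadratically, so $p$ is only locally Lipschitz and the naive bounded-Lipschitz Gronwall estimate does not close. Your proposed repair --- a weighted metric combined with the propagated moment bounds, or outright restriction to continuous, strictly positive $\pi$ --- is the correct one, but it is a genuine piece of work rather than a routine verification; without it your argument delivers only subsequential convergence to \emph{some} solution of \eqref{AIESmanyODE}, not the deterministic limit claimed. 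Since the paper proves nothing here, this does not put you at odds with the authors; it rather makes precise the hypotheses under which their proposition is actually a theorem.
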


To interpret this result, suppose $\mathbf{X}$ and $\mathbf{Y}$ are independent samples distributed according to the current empirical measure $\mu_t$.
Choose $Z$ according to the density in \eqref{eq:Zdensity} and define $\widetilde{\mathbf{X}}=\mathbf{X}Z+\mathbf{Y}(1-Z)$, as in \eqref{eq:Proposal}. 
Set $\mathbf{X}'$ to be $\widetilde{\mathbf{X}}$ with probability $p(\mathbf{X},\mathbf{Y},Z)$ and $\mathbf{X}$ otherwise, and let $\mu'_t$ denote the measure encoding the distribution of $\mathbf{X}'$, averaged over all the possibilities for $\mathbf{X},\mathbf{Y}$ and $Z$.
Then Proposition~\ref{prop:AIESmany} says that the measure $\mu_t$ evolves by travelling in the direction of the line (in the space of measures) joining $\mu_t$ to $\mu'_t$.

The intuition behind Proposition~\ref{prop:AIESmany} is that the average effect of each move is to take $\mu_t$ in the direction toward $\mu'_t$.
Each move changes a fraction $1/L$ of the measure $\mu^{(L)}(t)$, but this is offset by the fact that moves occur at rate $L$.
The fact that the limiting dynamics are deterministic is established by examining products of empirical means.

The authors did not find any explicit solutions to the system of equations~\eqref{AIESmanyODE}.
(In the simplest case $n=1$, $\pi(\mathbf{x})\propto \exp\left(-\mathbf{x}^2/2 \right)$, i.e., a one-dimensional standard normal density, any normal initial data that is not standard becomes non-normally distributed at positive times and does not appear to follow any simple trajectory.)
However, it is possible to consider \eqref{AIESmanyODE} for other choices for the function $p$.
The simplest possible choice is to take $p=1$, i.e., to accept proposals unconditionally.
Even in this case, we still cannot solve \eqref{AIESmanyODE}, but we can make the following observation.

\begin{prop}\label{prop:AlwaysAccept}
Consider the system of equations \eqref{AIESmanyODE} where the function $p(\mathbf{x},\mathbf{y},z)$ is replaced by the constant 1.
If the $i^\text{th}$ coordinate has finite second moment under the initial measure, $\int x_i^2 d\mu_0(\mathbf{x})<\infty$, then its mean $\int x_i d\mu_t(\mathbf{x})$ is constant and its variance $\Var_{\mu_t}(x_i)=\int x_i^2 d\mu_t(\mathbf{x})-(\int x_i d\mu_t(\mathbf{x}))^2$ evolves according to
\begin{equation}
\frac{d}{dt} \Var_{\mu_t}(x_i) = \E\big( Z\left(Z-1\right) \big) \Var_{\mu_t}(x_i).
\end{equation}
\end{prop}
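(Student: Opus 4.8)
The plan is to feed the weak-form evolution equation \eqref{AIESmanyODE}, with $p\equiv 1$, the three monomial test functions $f(\mathbf{x})=1$, $f(\mathbf{x})=x_i$, and $f(\mathbf{x})=x_i^2$ in turn, and then assemble the resulting scalar ODEs. Writing $m_i(t)=\int x_i\,d\mu_t$ and $s_i(t)=\int x_i^2\,d\mu_t$, the variance is $\Var_{\mu_t}(x_i)=s_i-m_i^2$, so it suffices to track $m_i$ and $s_i$ separately. The case $f\equiv 1$ comes first: the integrand in \eqref{AIESmanyODE} vanishes identically, so total mass is conserved and $\mu_t$ remains a probability measure for all $t$. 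This is what legitimises replacing one of the two $\mu_t$-integrations by unit mass in the steps below.

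Next, the mean. Taking $f(\mathbf{x})=x_i$ and using linearity gives $f\bigl(Z\mathbf{x}+(1-Z)\mathbf{y}\bigr)-f(\mathbf{x})=(Z-1)(x_i-y_i)$. After the expectation over $Z$ the double integral factorises, and since $\mathbf{x}$ and $\mathbf{y}$ are integrated against the \emph{same} measure $\mu_t$, the antisymmetric integrand $x_i-y_i$ integrates to $m_i-m_i=0$. Hence $\frac{d}{dt}m_i=0$ and the mean is constant, which is the first assertion.

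For the second moment I would take $f(\mathbf{x})=x_i^2$ and expand $\bigl(Zx_i+(1-Z)y_i\bigr)^2-x_i^2$ into the three monomials $x_i^2$, $x_iy_i$, $y_i^2$ with $Z$-dependent coefficients. Taking the expectation over $Z$ and integrating against $d\mu_t(\mathbf{x})\,d\mu_t(\mathbf{y})$, the pure-square terms each contribute $s_i$ (the unused integration being against unit mass) and the cross term contributes $m_i^2$. Collecting the coefficients, which are the elementary moments $\E(Z^2)$, $\E(Z(1-Z))$, $\E((1-Z)^2)$, the terms regroup into a multiple of $s_i-m_i^2$, and gathering the moments of $Z$ produces the coefficient $\E(Z(Z-1))$ of the statement. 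The decisive structural point is that this equation \emph{closes}: no third or higher moment of $\mu_t$ appears, so the second moment obeys an autonomous linear ODE. Since $m_i$ is constant, $\frac{d}{dt}\Var_{\mu_t}(x_i)=\frac{d}{dt}s_i=\E(Z(Z-1))(s_i-m_i^2)=\E(Z(Z-1))\Var_{\mu_t}(x_i)$, as claimed.

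The one genuine subtlety, as opposed to the routine algebra, is that \eqref{AIESmanyODE} is a priori a statement about bounded (or sufficiently integrable) test functions, whereas $x_i$ and $x_i^2$ are unbounded. The hypothesis $\int x_i^2\,d\mu_0<\infty$ is exactly what is needed here: the hard part is to argue that the second moment stays finite for all $t$, which both justifies using these polynomial test functions and validates the interchange of differentiation and integration. This can be done by an approximation/truncation argument together with the fact that, a posteriori, solving the linear ODE gives $\Var_{\mu_t}(x_i)=\Var_{\mu_0}(x_i)\,e^{\E(Z(Z-1))\,t}$, so no moment blows up in finite time (note $Z$ is bounded, via \eqref{eq:Zdensity}, so all its moments exist). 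Establishing this finiteness and the admissibility of unbounded test functions is the main obstacle; everything else is bookkeeping of the moments of $Z$.
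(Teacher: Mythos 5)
Your overall strategy --- feeding $f\equiv 1$, $f(\mathbf{x})=x_i$ and $f(\mathbf{x})=x_i^2$ into \eqref{AIESmanyODE} with $p\equiv 1$ and closing the resulting moment equations --- is surely the intended route; the paper itself states this proposition without proof, so there is no written argument to compare against. Your mass and mean computations are correct: the integrand vanishes for $f\equiv 1$, and for $f(\mathbf{x})=x_i$ the integrand $(Z-1)(x_i-y_i)$ integrates to zero against the product measure, so $m_i(t)=\int x_i\,d\mu_t$ is constant. Your point about unbounded test functions and truncation is legitimate but routine, and your handling of it is fine.

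The genuine gap is the one step you describe but never carry out: ``gathering the moments of $Z$ produces the coefficient $\E(Z(Z-1))$.'' It does not. Writing $s_i=\int x_i^2\,d\mu_t$ and $m_i=\int x_i\,d\mu_t$, the expansion is
\begin{align}
\bigl(Zx_i+(1-Z)y_i\bigr)^2-x_i^2 &= (Z^2-1)\,x_i^2+2Z(1-Z)\,x_iy_i+(1-Z)^2\,y_i^2, \notag
\end{align}
and integrating against $d\mu_t(\mathbf{x})\,d\mu_t(\mathbf{y})$ and taking the expectation over $Z$ gives
\begin{align}
\frac{d}{dt}s_i &= \bigl(\E(Z^2)-1\bigr)s_i+2\,\E\bigl(Z(1-Z)\bigr)m_i^2+\E\bigl((1-Z)^2\bigr)s_i \notag\\
&= 2\,\E\bigl(Z(Z-1)\bigr)s_i-2\,\E\bigl(Z(Z-1)\bigr)m_i^2
= 2\,\E\bigl(Z(Z-1)\bigr)\Var_{\mu_t}(x_i), \notag
\end{align}
so that, using the constancy of $m_i$, the dynamics \eqref{AIESmanyODE} actually yield $\frac{d}{dt}\Var_{\mu_t}(x_i)=2\,\E\bigl(Z(Z-1)\bigr)\Var_{\mu_t}(x_i)$ --- twice the claimed rate. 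A sanity check confirms the factor: for deterministic $Z=z$ and a mean-zero measure with variance $v$, one stretch move changes the moved walker's variance to $\bigl(z^2+(1-z)^2\bigr)v=\bigl(1+2z(z-1)\bigr)v$, and each walker is updated at rate $1$ in the continuous-time normalisation. So either the statement as printed carries a factor-of-two slip relative to \eqref{AIESmanyODE} (the qualitative conclusion --- exponential growth or decay according to the sign of $\E(Z(Z-1))$ --- is unaffected), or a factor $\tfrac12$ must be found in the dynamics, and there is none as written. Either way, your proof as submitted asserts a coefficient that the computation it gestures at does not produce; the decisive bookkeeping you dismissed as routine is precisely where the argument fails to establish the stated display, and your closed-form solution $\Var_{\mu_0}(x_i)\,e^{\E(Z(Z-1))t}$ inherits the same discrepancy.
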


Thus the variance will either grow or decay exponentially, depending on whether $\E\big(Z\left(Z-1\right)\big)$ is positive or negative.

\subsection{The AIES for a high-dimensional standard Gaussian}\label{ss:AIESstandardGaussian}

We next consider a single stretch move for the target density
\begin{equation}
\pi'(\mathbf{x})=c\cdot \exp\left(-\frac{1}{2}\sum_{i=1}^n x_i^2 \right)
\end{equation}
corresponding to an $n$-dimensional standard normal distribution.
The acceptance probability from \eqref{eq:AcceptanceProb} becomes $p(\mathbf{x},\mathbf{y},z)=\min\set{1,\exp\left(h\left(\mathbf{x},\mathbf{y},z\right)\right)}$ where
\begin{equation}
h(\mathbf{x},\mathbf{y},z) = (n-1)\log z - \frac{1}{2}\sum_{i=1}^n\big( z x_i+ \left(1-z\right)y_i\big)^2 + \frac{1}{2}\sum_{i=1}^n x_i^2.
\label{eq:AcceptNormal}
\end{equation}
To analyze $h(\mathbf{x},\mathbf{y},z)$, we make the following assumption about the randomly chosen walkers $\mathbf{X}$ and $\mathbf{Y}$ for the move under consideration.

\begin{assumption}\label{IIDassumption}
The coordinates of $X_1,\dotsc,X_n$ and $Y_1,\dotsc,Y_n$ are mutually independent and identically distributed (i.i.d.) with common mean $\mu$ and common variance $\sigma^2$.
\end{assumption}

\begin{prop}\label{prop:phScaling}
Subject to Assumption~\ref{IIDassumption}, $\frac{1}{n}h(\mathbf{X},\mathbf{Y},z)\to f_\sigma(z)$ where
\begin{equation}
f_\sigma(z)=\log z-\sigma^2 z(z-1).
\end{equation}
The acceptance probability $p(\mathbf{X},\mathbf{Y},z)$ converges to 1 if $f_\sigma(z)>0$ and to 0 if $f_\sigma(z)<0$.
\end{prop}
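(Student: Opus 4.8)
The plan is to write $h$ as $n$ times an average of i.i.d.\ terms and then apply the law of large numbers coordinate by coordinate. First I would divide \eqref{eq:AcceptNormal} by $n$ and expand the square $(zx_i+(1-z)y_i)^2 = z^2 x_i^2 + 2z(1-z)x_i y_i + (1-z)^2 y_i^2$, so that
\[
\frac{1}{n}h(\mathbf{X},\mathbf{Y},z)=\frac{n-1}{n}\log z-\frac{z^2}{2}A_n-z(1-z)C_n-\frac{(1-z)^2}{2}B_n+\frac{1}{2}A_n,
\]
where $A_n=\frac1n\sum_i X_i^2$, $B_n=\frac1n\sum_i Y_i^2$ and $C_n=\frac1n\sum_i X_iY_i$. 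Each of these is an average of i.i.d.\ summands with finite mean (using only the finite-second-moment part of Assumption~\ref{IIDassumption}, together with Cauchy--Schwarz for $C_n$), so by the weak law of large numbers $A_n,B_n\to\mu^2+\sigma^2$ and $C_n\to\mu^2$ in probability, and moreover almost surely by the strong law if the coordinates are realised as a single coupled i.i.d.\ sequence indexed by $n$. The only slightly delicate point is the cross term: because $X_i$ and $Y_i$ are independent, $\E(X_iY_i)=\mu^2$ rather than $\mu^2+\sigma^2$, and it is precisely this gap that supplies the $\sigma^2$ in the final answer.

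Then I would substitute these limits and simplify. Writing $m=\mu^2+\sigma^2$, the deterministic limit is
\[
\log z-\Big[\tfrac{m}{2}z^2+\mu^2 z(1-z)+\tfrac{m}{2}(1-z)^2\Big]+\tfrac{m}{2}.
\]
Collecting the $z^2$ and $(1-z)^2$ terms gives $\tfrac{m}{2}(2z^2-2z+1)$, and combining the leftover $-\tfrac m2(2z^2-2z+1)$ with the $+\tfrac m2$ constant leaves $m\,z(1-z)$; subtracting the $\mu^2 z(1-z)$ term then yields $(m-\mu^2)z(1-z)=\sigma^2 z(1-z)=-\sigma^2 z(z-1)$, so the limit is exactly $\log z-\sigma^2 z(z-1)=f_\sigma(z)$. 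I would emphasise that the mean $\mu$ drops out entirely, which is a reassuring consistency check: the stretch move's asymptotic acceptance behaviour depends on the dispersion of the walkers, not their location.

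Finally, for the acceptance probability I would convert the convergence of $\tfrac1n h$ into the claimed dichotomy. Fix $z$ with $f_\sigma(z)>0$ and set $\varepsilon=f_\sigma(z)/2$. By convergence in probability, with probability tending to $1$ we have $\tfrac1n h>f_\sigma(z)-\varepsilon=f_\sigma(z)/2>0$, hence $h>n f_\sigma(z)/2\to+\infty$ and $\exp(h)>1$, so $p(\mathbf{X},\mathbf{Y},z)=\min\set{1,\exp(h)}=1$ on that event; thus $p\to1$ in probability. The case $f_\sigma(z)<0$ is symmetric, forcing $h\to-\infty$ and $p\to0$. The main obstacle---really the only place where care is needed---is the mode of convergence: the statement writes ``$\to$'' but the honest sense is convergence in probability, and because $h=n\cdot(\tfrac1n h)$ amplifies fluctuations by a factor $n$, one cannot naively pass the limit through $\exp$. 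Instead the sign of $f_\sigma(z)$ must be used to drive $h$ to $\pm\infty$ on a high-probability event, as above. The borderline case $f_\sigma(z)=0$ is not claimed and would genuinely require a finer, fluctuation-level analysis of the $O(\sqrt n)$ corrections to $h$.
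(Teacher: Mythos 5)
Your proposal is correct and follows essentially the same route as the paper: expand the square in \eqref{eq:AcceptNormal}, apply the Law of Large Numbers (the paper treats the whole summand as one i.i.d.\ average, you split it into the three averages $A_n,B_n,C_n$, which is only a cosmetic difference), and then conclude the $0$--$1$ dichotomy for $p$ from the sign of $f_\sigma(z)$ forcing $h\to\pm\infty$. Your final paragraph is in fact more careful than the paper's one-line conclusion, since you make the mode of convergence explicit and note why one cannot simply pass the limit through $\exp$; this is a refinement, not a different approach.
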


\begin{proof}
This is an application of the Law of Large Numbers:
\begin{align}
\frac{h(\mathbf{x},\mathbf{y},z)}{n}
&=
-\frac{\log z}{n}+\frac{1}{n}\sum_{i=1}^n \left(\log z- \frac{1}{2}(z^2 -1)x_i^2 - z(1-z)x_i y_i-\frac{1}{2}(1-z)^2 y_i^2 \right)
\notag\\&
\to 0+\E\left( \log z- \frac{1}{2}(z^2-1)x_i^2 - z(1-z)x_i y_i-\frac{1}{2}(1-z)^2 y_i^2 \right)
\notag\\&
= \log z -\frac{1}{2}(z^2-1)(\sigma^2+\mu^2)-z(1-z)\mu\cdot\mu-\frac{1}{2}(1-z)^2(\sigma^2+\mu^2)
\notag\\&
= \log z - \sigma^2 \frac{z^2-1+(1-z)^2}{2} - \mu^2 \frac{z^2-1+2z(1-z)+(1-z)^2}{2}
\notag\\&
= f_\sigma(z)
.
\end{align}
The convergence of $p(\mathbf{X},\mathbf{Y},z)=\min\left(1,\exp(h(\mathbf{X},\mathbf{Y},z))\right)$ follows because either $h(\mathbf{X},\mathbf{Y},z)\to\infty$ or $h(X,Y,z)\to -\infty$ depending on the sign of $f_\sigma(z)$.
\end{proof}

The behaviour of $f_\sigma(z)$ for three values of $\sigma$ is shown in Figure~\ref{fig:fsigmaz}.
When $\sigma>1$, $f_\sigma(z)$ is positive for $z$ slightly smaller than $1$.
When $\sigma<1$, $f_\sigma(z)$ is positive for $z$ slightly larger than $1$.
In the critical case $\sigma=1$, $f_\sigma(z)$ is always negative and a Taylor expansion gives
\begin{equation}
f_1(z) \approx -\tfrac{3}{2}(z-1)^2 \qquad\text{for $z$ close to }1.
\end{equation}

The interpretation of Assumption~\ref{IIDassumption} and Proposition~\ref{prop:phScaling} is as follows.
Freeze the AIES algorithm at time $t$.
The walkers $\mathbf{X}$ and $\mathbf{Y}$ to be used in the next move will be drawn independently from the current population of walkers.
Assume without proof that the coordinates $X_1,\dotsc,X_n$ and $Y_1,\dotsc,Y_n$ are i.i.d.\ (or at least are sufficiently close to i.i.d.\ for the conclusion of Proposition~\ref{prop:phScaling} to hold).
\begin{wrapfigure}{r}{0.5\textwidth}
\vspace{-13pt}
\includegraphics[width=0.95\linewidth]{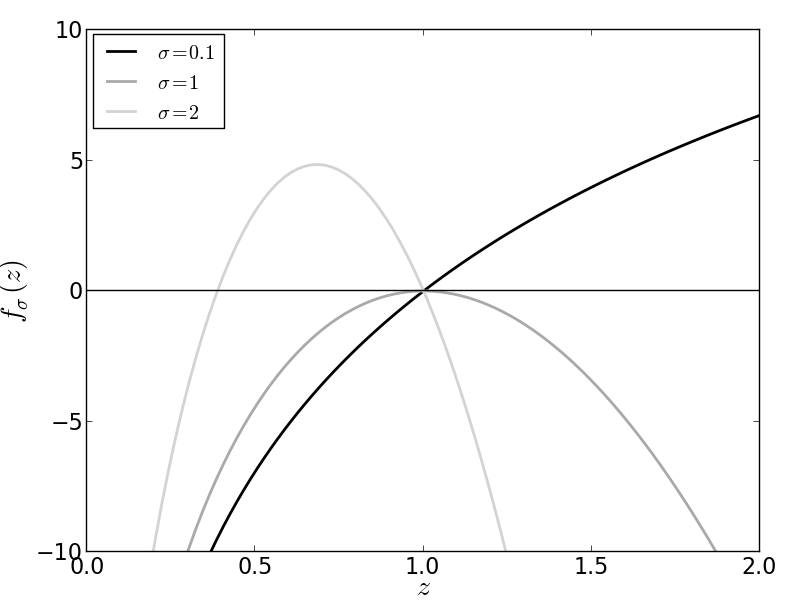}
\vspace{-10pt}
\caption{The graph of the function  $F(z) = n(\log z -\sigma^2(t) z(z-1))$ as a function of $z$ for 
$\sigma(t)=0.1$ (dashed-dotted line), $\sigma(t)=1$ (dotted line), and $\sigma(t)=2$ (dashed line).}
  \label{fig:fsigmaz}
\vspace{-5pt}
%\vspace{-5pt}
\end{wrapfigure}
Then the acceptance probability at the next move is effectively independent of the actual walkers $\mathbf{X}$ and $\mathbf{Y}$.
Furthermore the dependence on $z$ is determined only by the variance $\sigma^2$.
To illustrate this effect, we ran three different MCMC-runs on a uncorrelated Gaussian, where each run consist $1000$ repetitions of two consecutive steps.  The first step  is a (re-)initialization step where the code draws initial conditions, aand the second step consist of one regular AIES iteration. 
Each of these $1000$ AIES iteration was started with a re-initialized initial condition obtained in the first step, where each of the $n$ components of every of $L$ walkers are drawn from a Gaussian distribution $N(0,\sigma^2)$ with $\sigma^2 \in \{0.1, 1 , 2\}$.  We obtained the accepted $z$-values, and created histograms displayed in Figure~\ref{fig:HistAcceptedZRepeated}. 
  For $\sigma\neq 1$, a sharp cutoff is observed across the value $Z=1$, becoming more pronounced as $n$ increases.
Values of $z$ on the predicted side of $1$ are accepted almost unconditionally.
For $\sigma=1$, the approximation $p(\mathbf{X},\mathbf{Y},z)\approx \exp\left( -cn(z-1)^2 \right)$ suggests that the accepted values of $Z$ will be roughly normally distributed with a spread that decreases with $n$, and this prediction is reflected in the histograms.

% These plots are made with:   python check_z_values_with_multiple_reinitializations2.py  (in emcee_altered)
\begin{figure}
\begin{subfigure}{0.32\textwidth}
\includegraphics[width=0.95\linewidth]{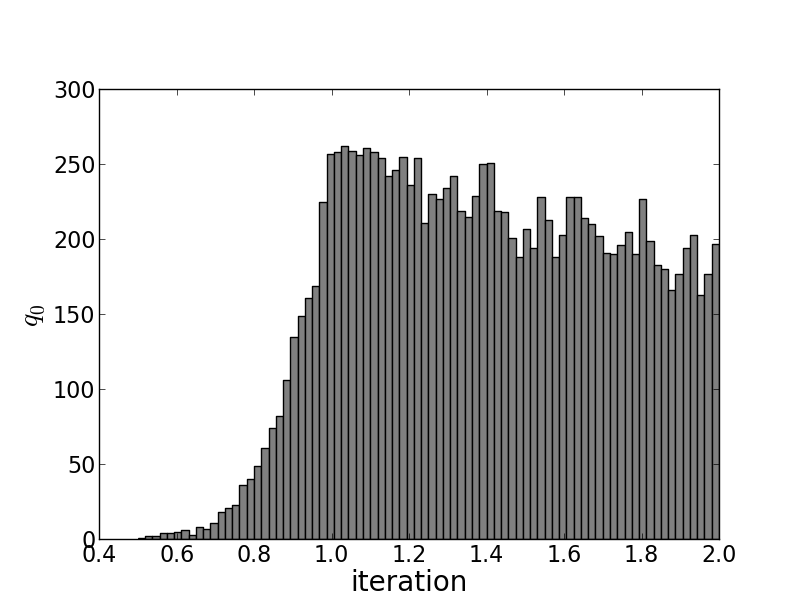}
\caption{$\sigma(0)=0.1$ for $n=10$}
\label{fig:Histzsigma0.1n10}
\end{subfigure}
\begin{subfigure}{0.32\textwidth}
\includegraphics[width=0.95\linewidth]{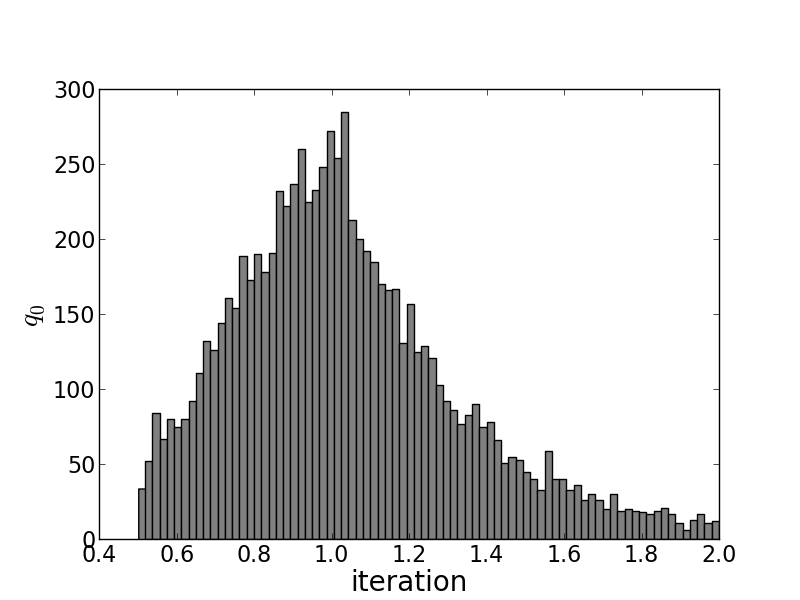}
\caption{$\sigma(0)=1.0$ for $n=10$.}
\label{fig:Histzsigma1.0n10}
\end{subfigure}  
\begin{subfigure}{0.32\textwidth}
\includegraphics[width=0.95\linewidth]{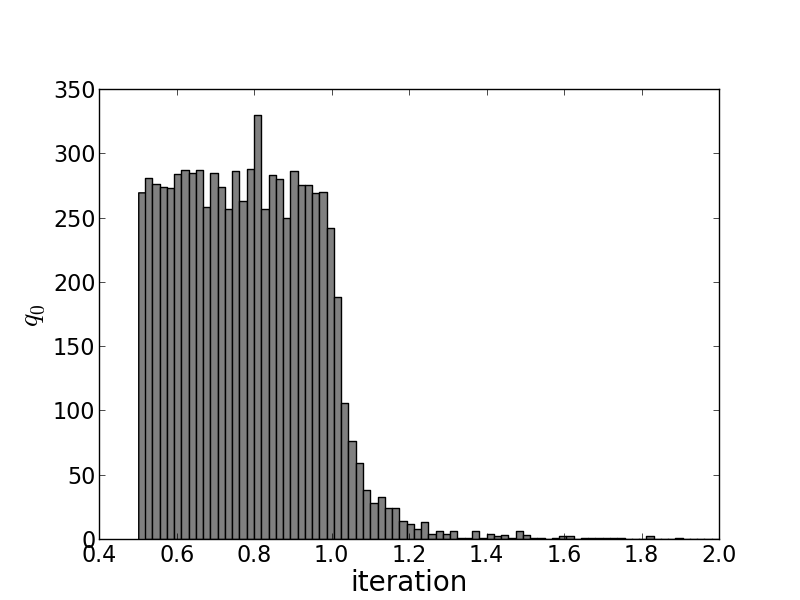}
\caption{$\sigma(0)=2.0$ for $n=10$.}
\label{fig:3c}
\end{subfigure}  \\

\begin{subfigure}{0.32\textwidth}
\includegraphics[width=0.95\linewidth]{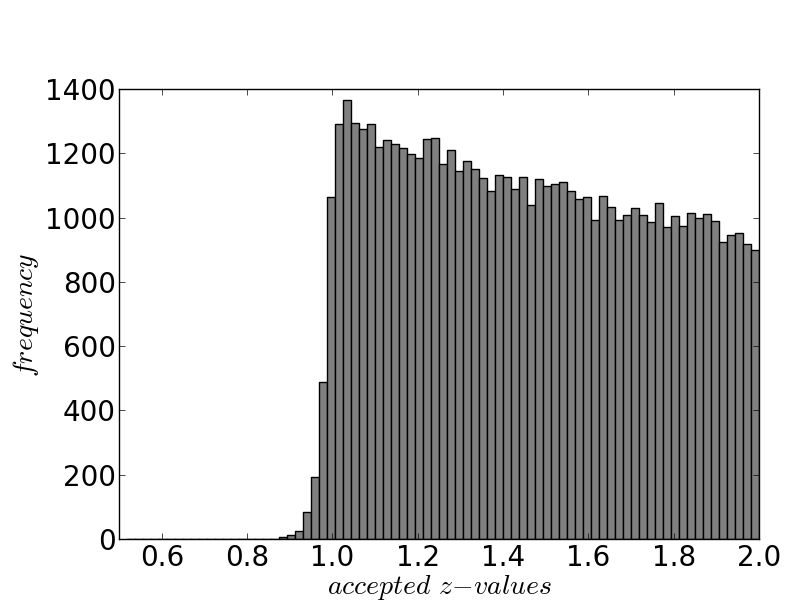}
\caption{$\sigma(0)=0.1$ for $n=50$.}
\label{fig:3d}
\end{subfigure}
\begin{subfigure}{0.32\textwidth}
\includegraphics[width=0.95\linewidth]{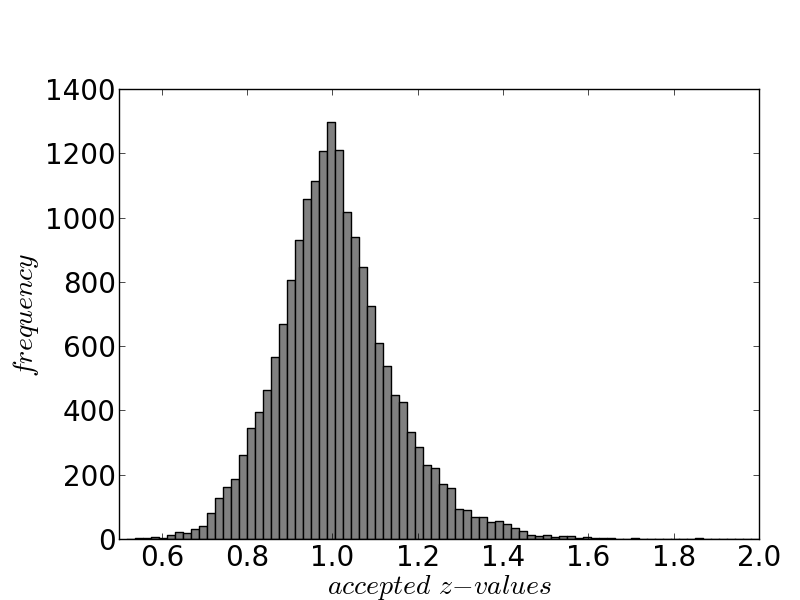}
\caption{$\sigma(0)=1.0$ for $n=50$.}
\label{fig:3e}
\end{subfigure}  
\begin{subfigure}{0.32\textwidth}
\includegraphics[width=0.95\linewidth]{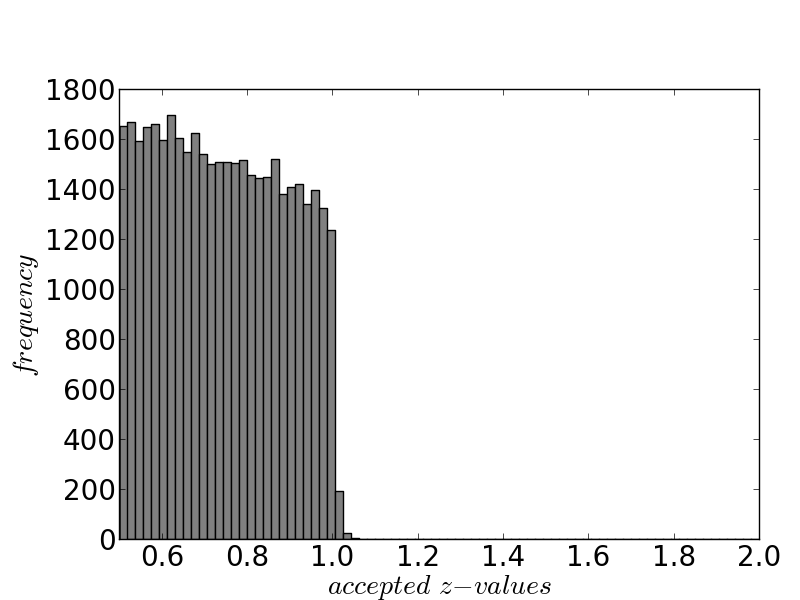}
\caption{$\sigma(0)=2.0$ for $n=50$.}
\label{fig:3f}
\end{subfigure}  \\
\begin{subfigure}{0.32\textwidth}
\includegraphics[width=0.95\linewidth]{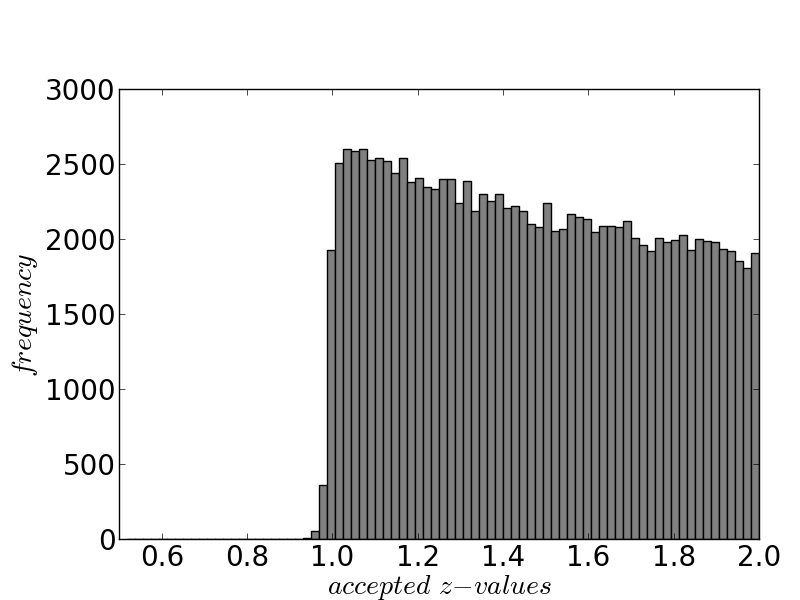}
\caption{$\sigma(0)=0.1$ for $n=100$.}
\label{fig:3g}
\end{subfigure}
\begin{subfigure}{0.32\textwidth}
\includegraphics[width=0.95\linewidth]{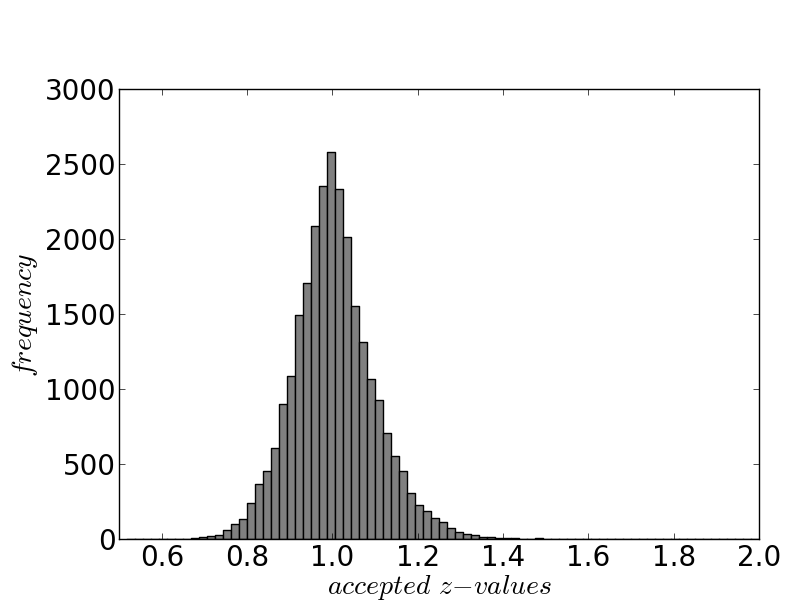}
\caption{$\sigma(0)=1.0$ for $n=100$.}
\label{fig:3h}
\end{subfigure}  
\begin{subfigure}{0.32\textwidth}
\includegraphics[width=0.95\linewidth]{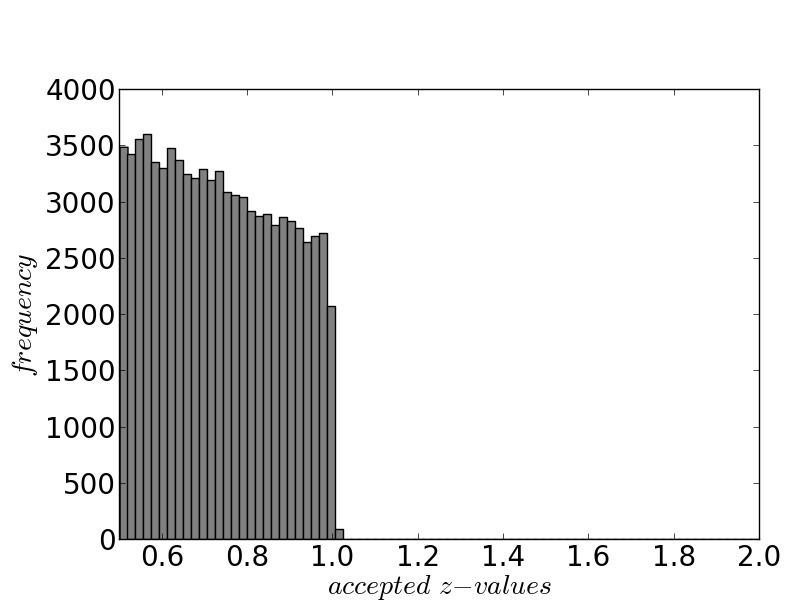}
\caption{$\sigma(0)=2.0$ for $n=100$.}
\label{fig:3i}
\end{subfigure}  
\caption{Histogram of the accepted $z$ for $n=10, 50, 100$ and $\sigma(0)=0.1,1,2$ for an MCMC run of an uncorrelated Gaussian with $t=1000$.}
\label{fig:HistAcceptedZRepeated}
\vspace{-10pt}
\end{figure}

\subsection{The AIES for a high-dimensional correlated Gaussian}
We now turn to the correlated AR(1) model from Section~\ref{sec:gaussian}.
Based on the analysis of Sections~\ref{ss:AIESmanyWalkers} and \ref{ss:AIESstandardGaussian}, we present a heuristic that explains the behaviour observed in Section~\ref{sec:gaussian}.

Because of Proposition~\ref{prop:AffineInvariance}, we can apply an affine transformation to this correlated Gaussian distribution into an uncorrelated one.
Recalling \eqref{eq:AR1distr}, the relevant transformation $\psi(x)$ will be
\begin{equation}
\begin{aligned}
 q_1 &= \psi_1(x)=x_1, & x_1 &=q_1, \\
 q_i &= \psi_i(x)=\frac{x_i-\alpha x_{i-1}}{\beta}, & x_i &=\alpha x_{i-1}+\beta q_i, && i\geq 2. 
\label{eq:xq}
\end{aligned}
\end{equation}
A random variable $\mathbf{X}$ has the AR(1) distribution if and only if the corresponding $\mathbf{Q}$ has the $n$-dimensional standard normal distribution.
This problem therefore falls in the setup of Section~\ref{ss:AIESstandardGaussian} with density $\pi'$ when expressed in terms of the $\mathbf{q}$-coordinate system.

Note that the coordinate $q_1$ plays a distinguished role, directly measuring the quantity of interest from the original system.
(Different transformations can be used to emphasise other quantities from the original system.)
The coordinates $q_2,\dotsc,q_n$ can be thought of as encoding how closely the coordinates $x_i$ conform to the correlation structure of the AR(1) distribution.

We will analyse the AIES algorithm from the perspective of the $\mathbf{q}$-coordinate system.
To begin, we must specify the initial walker coordinates.
In practice, we are unlikely to have any knowledge of the $\mathbf{q}$-coordinate system.
Therefore the most obvious choice is to generate i.i.d.\ initial coordinates $X_i^{(j)}(0)$ in the $\mathbf{x}$-coordinate system.
By \eqref{eq:xq}, the initial $\mathbf{q}$-coordinates have variances
\begin{equation}
\begin{aligned}
\Var\big(Q_1^{(j)}(0)\big)&=\Var\Big(\psi_1\big(\mathbf{X}^{(j)}(0)\big)\Big) = \Var\left(X_1^{(j)}\left(0\right)\right), \\
\Var\big(Q_i^{(j)}(0)\big)&=\Var\left(\psi_i\big(\mathbf{X}^{(j)}(0)\big)\right) = \frac{1+\alpha^2}{\beta^2} \Var\left(X_i^{(j)}\left(0\right)\right), & i\geq 2.
\end{aligned}
\label{eq:InitialVariances}
\end{equation}
Note that the variance for $i\geq 2$ is larger by a factor $\frac{1+\alpha^2}{1-\alpha^2}$ compared to $i=1$.
This factor becomes large in the highly correlated case where $\alpha$ is close to 1.

To proceed with our heuristic analysis, we introduce without proof the following assumption.

\begin{assumption}\label{AlwaysIIDassumption}
At every time $t$ in the AIES algorithm, the $\mathbf{q}$-coordinates $(Q^{(J)}_i(t),i=1,\dotsc,n)$ (possibly excluding the first coordinate) of a randomly chosen walker are i.i.d.\ with common mean $\mu(t)$ and common variance $\sigma(t)^2$ -- or at least are sufficiently close to i.i.d.\ that the conclusions of Proposition~\ref{prop:phScaling} apply.
\end{assumption}

In reality, the assumption of independence will not hold even at time $t=0$; the initial $\mathbf{q}$-coordinates are only weakly correlated in the sense that $\Cov(Q_i^{(j)}(0),Q_{i'}^{(j)}(0))=0$ if $\abs{i-i'}\geq 2$.
Even if the initial $q$-coordinates were chosen in an i.i.d.\ way, there would be no reason for the AIES dynamics to preserve this property at later times.
However, Proposition~\ref{prop:phScaling} depends only on a Law of Large Numbers effect, and at the level of a heuristic it is reasonable to expect this effect to be robust.

Subject to Assumption~\ref{AlwaysIIDassumption}, Proposition~\ref{prop:phScaling} asserts that the acceptance probability $p(\mathbf{X},\mathbf{Y},Z)$ is essentially independent of the actual walker positions $\mathbf{X}$ and $\mathbf{Y}$.
In particular, it is essentially independent of the actual first $q$-coordinates $\psi_1(\mathbf{X})$ and $\psi_1(\mathbf{Y})$.
From the perspective of the first coordinates only, the AIES dynamics are 
approximated\footnote{A careful justification of this approximation involves more than the fact that $p(\mathbf{X},\mathbf{Y},z)$ is largely independent of $\mathbf{X}$ and $\mathbf{Y}$.
                      Specifically, the justification is that the probabilities $p(\mathbf{X},\mathbf{Y},Z)$ and $p'$ are unlikely to have a large difference.
											This holds because according to Proposition~\ref{prop:phScaling} the quantity $h(\mathbf{X},\mathbf{Y},Z)$ is likely to be either large and positive (in which case it is unlikely to become negative after removing the $i=1$ term, so that both $p$ and $p'$ will be 1) or large and negative (in which case removing the $i=1$ term may make a difference, but both $p$ and $p'$ will still be small).
											This reasoning will break down when $h(\mathbf{X},\mathbf{Y},Z)$ has the possibility to be small, which will happen when $Z$ and $\sigma(t)$ are close to 1.} 
by the following:
\begin{enumerate}
\item
Select walkers $\mathbf{X}$ and $\mathbf{Y}$, stretching variable $Z$, and proposal $\widetilde{\mathbf{X}}=Z\mathbf{X}+(1-Z)\mathbf{Y}$ as usual.
Write $\mathbf{Q}=\psi(\mathbf{X})$, $\mathbf{U}=\psi(\mathbf{Y})$, and $\widetilde{\mathbf{Q}}=\psi(\widetilde{\mathbf{X}})$.

\item\label{item:AcceptProbModified}
Define the modified acceptance probability 
\begin{equation}
p' = \min\Bigg(1, Z^{n-1}\exp\bigg( -\frac{1}{2}\sum_{i=2}^n \Big( \big(ZQ_i+(1-Z)U_i\big)^2 - Q_i^2 \Big) \bigg) \Bigg)
\end{equation}
solely in terms of $\mathbf{q}$-coordinates 2 to $n$.
In coordinates 2 to $n$, update from $\mathbf{Q}$ to $\widetilde{\mathbf{Q}}$ with probability $p'$.

\item
For each move accepted in step~\ref*{item:AcceptProbModified}, apply the same move to the first coordinate with probability 1.
\end{enumerate}

From the perspective of the first coordinates, the accepted stretching variables $Z$ follow a modified distribution $\widetilde{Z}(t)$ that may vary over time as the other $q$-coordinates equilibrate.
Stretching variables $\widetilde{Z}(t)$ also arrive at a reduced average frequency $r(t)=\E(p')$.
However, when they arrive, they are always accepted, as in Proposition~\ref{prop:AlwaysAccept}.
We therefore make the following predictions:

\begin{prediction}\label{pred:VarianceFromsigmat}
Write 
\begin{equation}
\Var_t(X_1) = \frac{1}{L} \sum_{j=1}^L X_1^{(j)}(t)^2 - \biggl( \frac{1}{L}\sum_{j=1}^L X_1^{(j)}(t) \biggr)^2
\end{equation}
for the empirical variance of $X_1$ as determined by the walkers at time $t$.
Then:
\begin{itemize}
\item 
When $\sigma(t)\gg 1$, $\Var_t(X_1)$ will decrease rapidly, regardless of whether it is too large or too small compared to the true value 1.
\item
When $\sigma(t)\ll 1$, $\Var_t(X_1)$ will increase rapidly, regardless of whether it is too large or too small compared to the true value 1.
\item
When $\sigma(t)$ is close to 1, $\Var_t(X_1)$ will not converge quickly to the true value 1.
\item
Quantitatively, 
\begin{equation}
\frac{d}{dt}\Var_t(X_1) \approx r(t) \E\left( \widetilde{Z}(t)(\widetilde{Z}(t)-1) \right) \Var_t(X_1).
\label{eq:19}
\end{equation}
\end{itemize}
\end{prediction}

% These plots are made with:   python Load_and_make_plot2.py (in emcee) 
% Obtain from the origial run without re-intialization
\begin{figure}
\begin{subfigure}{0.31\textwidth}
\includegraphics[width=0.99\linewidth]{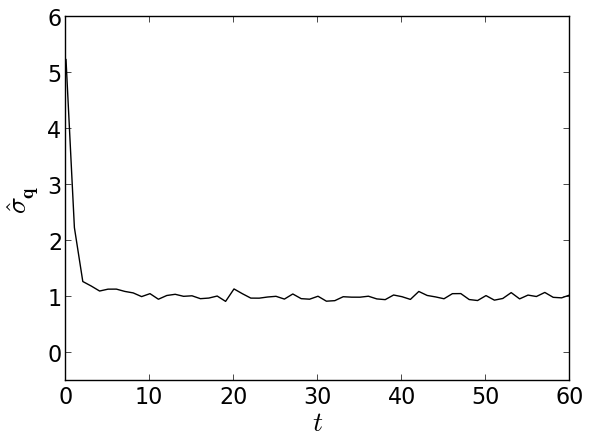}
\caption{$n=10$} 
\label{fig:4a}
\end{subfigure}
\begin{subfigure}{0.31\textwidth}
\includegraphics[width=0.99\linewidth]{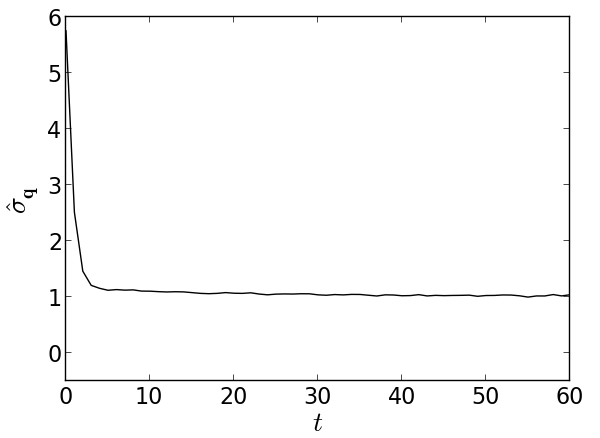}
\caption{$n=50$} 
\label{fig:4b}
\end{subfigure}  
\begin{subfigure}{0.31\textwidth}
\includegraphics[width=0.99\linewidth]{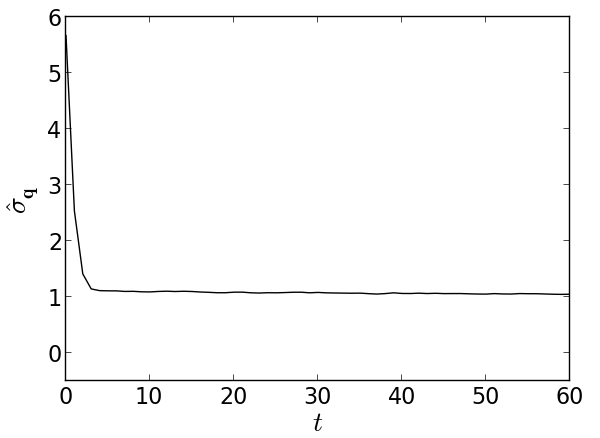}
\caption{$n=100$} 
\label{fig:4c}
\end{subfigure}  
\caption{The variance of $q$ over all walkers and coordinates for $\alpha=0.9$, $t=1000$}
\label{fig:fig4}
\end{figure}

To test Prediction~\ref{pred:VarianceFromsigmat}, we graphed the average empirical standard deviation of all $\mathbf{q}$-coordinates at times close to the fast burn-in phase: see Figure~\ref{fig:fig4}.
The initial value is greater than 1, as predicted by \eqref{eq:InitialVariances}, and has equilibrated around 1 after about 10 iterations.
In Figure~\ref{fig:fig5}, the empirical variance $\Var_t(X_1)$ of all first coordinates is graphed over the same time interval.
Especially for $n=50$ and $n=100$, the variance decreases rapidly at first, then levels off around the same time, 10 iterations.
This confirms the qualitative parts of Prediction~\ref{pred:VarianceFromsigmat}.

We also tested the quantitative prediction in equation~\eqref{eq:19}.
At five selected times $t=2,8,14, 30, 50$, we overlaid the predicted tangent line from Prediction~\ref{pred:VarianceFromsigmat} -- i.e., the line with slope given by the right-hand side of equation~\eqref{eq:19} -- onto the graph of $\Var_t(X_1)$.
The quantities $r(t)$ and $\E\left( \widetilde{Z}(t)(\widetilde{Z}(t)-1) \right)$ in equation~\eqref{eq:19} depend on the hypothetical distribution of all $Z$ values that would be accepted given the actual walkers at time $t$, and were therefore approximated by observing the accepted $Z$ values from 100 auxiliary {\tt emcee} iterations, each initialised with the actual walker positions at time $t$.

The results are shown in Figure~\ref{fig:fig5}.
For $n=10$, the predicted lines do not very closely track the underlying curve, but for $n=50$ and $n=100$ there is good agreement with Prediction~\ref{pred:VarianceFromsigmat}.

% These plots are made with: python create_tangent_lines.py   (in emcee_altered)
\begin{figure}[ht]
\begin{subfigure}{0.32\textwidth}
\includegraphics[width=0.99\linewidth]{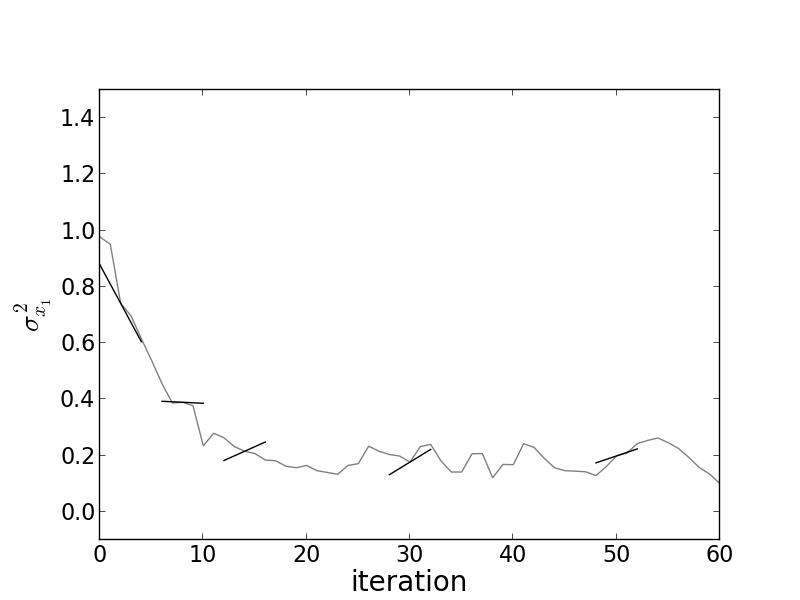}
\caption{$n=10$} 
\label{fig:5a}
\end{subfigure}
\begin{subfigure}{0.32\textwidth}
\includegraphics[width=0.99\linewidth]{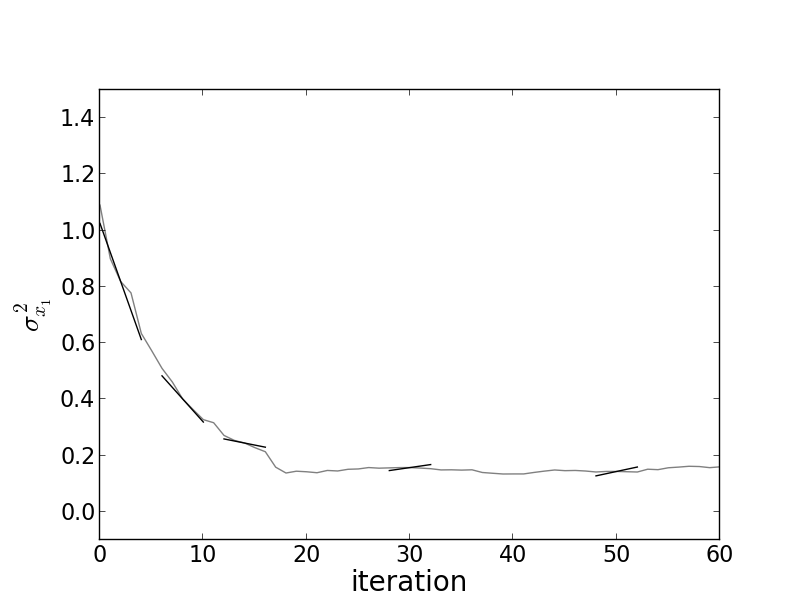}
\caption{$n=50$} 
\label{fig:5b}
\end{subfigure}  
\begin{subfigure}{0.32\textwidth}
\includegraphics[width=0.99\linewidth]{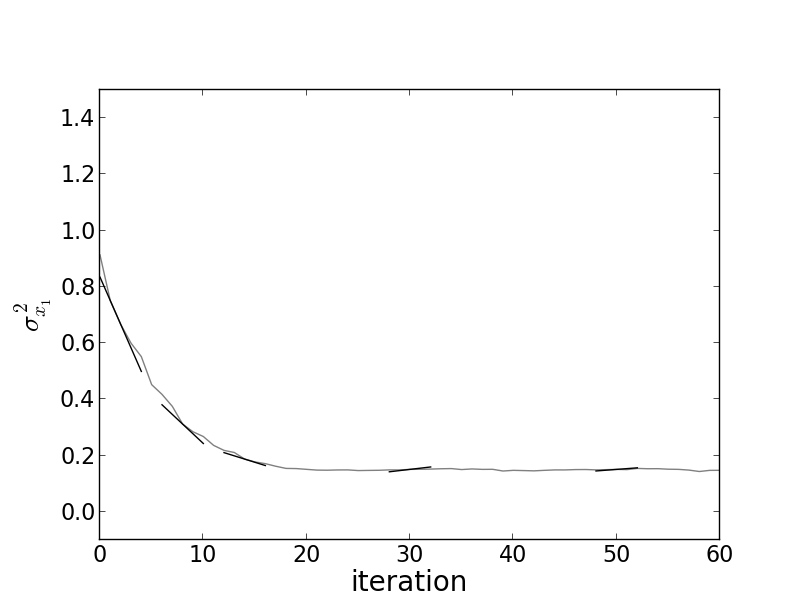}
\caption{$n=100$} 
\label{fig:5c}
\end{subfigure}  
\caption{In grey: the empirical variance $\Var_t(X_1)$, as estimated by the first coordinates $X_1^{(j)}(t)$ of all walkers at time $t$, for $0\leq t\leq 60$.
Overlaid in black: five predicted ``tangent'' lines for the curve, at times $t=2, 8, 14, 30, 50$, based on the slopes predicted in Equation \eqref{eq:19}.}
\label{fig:fig5}
\end{figure}

\section{Discussion}
 Even though the AIES has been used with success in the past on numerous occasions, we advise caution in high dimensional problems. 
The benchmark model we used to probe the problems of the AIES was a relatively simple model which already fails at $n=100$, and we expect problems to be even worse for more strongly correlated, or more complex, models. 

Unsurprisingly, other MCMC methods work more efficiently for the benchmark model we chose.
For instance, if the target distribution is interpreted as a posterior distribution with an uncorrelated Gaussian as the prior, then the elliptical slice sampler of \cite{MurAdaMac2010} gives good results.
If the target distribution is interpreted as a time-discretisation of a continuous process (in this case the Ornstein-Uhlenbeck process $U_t$, the centred Gaussian process with $\Cov(U_t,U_s)=e^{-\abs{t-s}}$, over the time interval $[0,\alpha n]$) then the ideas of \cite{CotRobStuWhi2013} can be used to obtain an MCMC method that handles dimensionality well.
However, these alternative methods require additional structure and analysis of the model.
Especially if the model is more complicated, the possibility of slow convergence may be a reasonable price to pay for the simplicity and generality of the AIES; the difficulty here is that the slow convergence can be hard to detect.

Indeed, an important issue in practice is how to know whether the AIES is experiencing the kind of problems we describe.
Evidently, it is not possible to rely on knowing the true target distribution, as we did.
%One common diagnostic is to report the acceptance ratio (the fraction of proposal steps actually accepted).
%If the ratio is too low, the MCMC algorithm is not exploring quickly; if it is too high, the algorithm is accepting too many moves without regard to the actual density and may not have finished exploring.
%In our models, the acceptance ratio indeed indicated the existence of some problems (see the values in Table~\ref{tab:1} and \ref{tab:rosenbrock_autocor}).\\

Our analysis shows that the \emph{profile} of accepted $Z$ values of the correlated Gaussian-- particularly if they clustered on either side of $z=1$ -- gave relevant information about the system.
%The python code for the multivariate Gelman-Rubin diagnostic applied to an ensemble  included in the appendix is free to use. 
In both of the test cases the adapted Gelman-Rubin diagnostic gave a good  indication of lack of convergence. This method might fail is $W$ or both $W$ and $B$ are singular, which might indicate an ill posed problem, or very high correlation between parameters. \\
   Finally, traceplots are common tools to visually assess the performance of MCMC methods.
Because the AIES is an ensemble method, some adaptations are necessary.
For instance, the straightforward traceplots in Figure~\ref{fig:1a}, \ref{fig:1b} and \ref{fig:1c}, showing the first coordinates over all walkers and steps, give an impression of the range of likely values but give little insight into the different mixing properties.
Instead, selecting a small number of walkers, as in Figures~\ref{fig:1j}, \ref{fig:1k} and \ref{fig:1l}, shows that individual walkers are mixing well (relative to the range of likely values) when $n=10$, but not when $n=50$ or $n=100$.
In Figure~\ref{fig:1j} for $n=10$, it seems like the first coordinate of each walker is free to explore parameter space (between about $-2$ and 2, a region where most walkers appear to spend most of their time, according to Figure~\ref{fig:1a}). 
By contrast, in Figures~\ref{fig:1k} and \ref{fig:1l} for $n=50$ and $n=100$, the first coordinates of each walker seem to be confined to much narrower regions (even accounting for unduly restricted range of walker positions in Figures~\ref{fig:1b} and \ref{fig:1c}) and the relative order among the sampled walkers changes much less frequently.
In our examination of the AIES in this high-dimensional model, this was the only sign of slow convergence that we could identify without knowing the characteristics of the true target distribution.
\indent  
 The performance of the AIES can be improved by implementing the optimal choice for parameters $a$ and $L$. 
The stretch parameter $a$ can adjusted depending on the acceptance rate which should typically be between $0.2$ to $0.5$. The acceptance fraction can be increased if it's too low by decreasing $a$, and it can be decreased if it's to high be increase $a$. A large $L$ would also improve the performance \citep{ForHogLanGoo2013}.\\
   In summary, high dimensions can bring problems that make the AIES converge slowly and, more disturbingly, appear to have converged even when it has not.
  Knowing the structure of the true distribution allowed us to make accurate predictions about the evolution of the AIES for our chosen model. Looking at two measures arising from the algorithm -- the profile of accepted $Z$ values, a subsample traceplot of a few walkers, the adapted Gelman-Rubin-diagnostics and the adapted Heidelberger-Welch test-- gave a possible signal of slow convergence. Such diagnostics, and a measure of caution, should be used when applying the AIES to high-dimensional problems.

\section{Acknowledgments}
We would like to thank Michael Betancourt, Bob Carpenter, Andrew Gelman,
and Jeorg Dietrich for valuable discussion and comments. We also thank the
reviewers of an earlier version of this paper for their constructive criticisms.

\bibliographystyle{anzsj}
\bibliography{references}

\end{document}